\documentclass{svjour3}
\smartqed
\usepackage{newtxtext,newtxmath}
\usepackage{graphicx}
  \graphicspath{{./Pics/}}
  \DeclareGraphicsExtensions{.pdf}
\usepackage{graphbox}
\usepackage{hyperref}
\usepackage{booktabs}
  \newcommand{\ra}[1]{\renewcommand{\arraystretch}{#1}}
\usepackage{xcolor}
\usepackage{url}
\usepackage{amsmath}
\usepackage{pifont}
\usepackage[ruled,vlined]{algorithm2e}
  \SetKwFor{RepTimes}{repeat}{times}{end}
\usepackage{xspace}
\usepackage{tikz}
  \usetikzlibrary{arrows,shapes,fit}
  \usetikzlibrary{automata}
  \usetikzlibrary{backgrounds}
  \usetikzlibrary{decorations.pathmorphing}
  \usetikzlibrary{calc}
  \usetikzlibrary{decorations.pathreplacing}
  \usetikzlibrary{shapes}
\usepackage{wrapfig}
\usepackage[textwidth=1.9cm,color=green!10,textsize=footnotesize]{todonotes}

\newcommand{\eps}{\varepsilon}
\DeclareMathOperator{\A}{\mathcal A}
\DeclareMathOperator{\G}{\mathcal G}

\DeclareMathOperator{\D}{\mathcal D}
\DeclareMathOperator{\C}{\mathcal C}
\newcommand{\complclass}[1]{{\sc #1}\xspace}
\newcommand{\Log}{\complclass{L}}
\newcommand{\NL}{\complclass{NL}}
\newcommand{\coNP}{\complclass{coNP}}
\newcommand{\NP}{\complclass{NP}}
\newcommand{\PSpace}{\complclass{PSpace}}
\newcommand{\PTime}{\complclass{P}}


\begin{document}
\title{Comparing the Notions of Opacity for Discrete-Event Systems\thanks{Partially supported by the Ministry of Education, Youth and Sports under the INTER-EXCELLENCE project LTAUSA19098 and by the University projects IGA PrF 2020 019 and IGA PrF 2021 022.}}
\author{Ji\v{r}\'{i} Balun \and
  Tom{\' a}{\v s}~Masopust
}

\institute{Ji\v{r}\'{i} Balun
  \at Faculty of Science, Palacky University in Olomouc, Czechia.
  \email{jiri.balun01@upol.cz}
  \and
  Tom{\' a}{\v s}~Masopust \at Faculty of Science, Palacky University in Olomouc, Czechia.
  \email{tomas.masopust@upol.cz}
}

\date{Received: date / Accepted: date}
\maketitle

\begin{abstract}
  Opacity is an information flow property characterizing whether a system reveals its secret to a passive observer. Several notions of opacity have been introduced in the literature. We study the notions of language-based opacity, current-state opacity, initial-state opacity, initial-and-final-state opacity, K-step opacity, and infinite-step opacity. Comparing the notions is a natural question that has been investigated and summarized by Wu and Lafortune, who provided transformations among current-state opacity, initial-and-final-state opacity, and language-based opacity, and, for prefix-closed languages, also between language-based opacity and initial-state opacity.
  We extend these results by showing that all the discussed notions of opacity are transformable to each other. Besides a deeper insight into the differences among the notions, the transformations have applications in complexity results. In particular, the transformations are computable in polynomial time and preserve the number of observable events and determinism, and hence the computational complexities of the verification of the notions coincide. We provide a complete and improved complexity picture of the verification of the discussed notions of opacity, and improve the algorithmic complexity of deciding language-based opacity, infinite-step opacity, and K-step opacity.
  \keywords{Discrete event systems \and Finite automata \and Opacity \and Transformations \and Complexity}
\end{abstract}

\section{Introduction}
  Applications often require to keep some information about the behavior of a system secret. Properties that guarantee such requirements include anonymity~\cite{SchneiderS96}, noninterference~\cite{Hadj-Alouane05}, secrecy~\cite{Alur2006}, security~\cite{Focardi94}, and opacity~\cite{Mazare04}.

  In this paper, we are interested in opacity for discrete-event systems (DESs) modeled by finite automata. Opacity is a state-estimation property that asks whether a system prevents an intruder from revealing the secret. The intruder is modeled as a passive observer with the complete knowledge of the structure of the system, but with only limited observation of the behavior of the system. Based on the observation, the intruder estimates the behavior of the system, and the system is opaque if the intruder never reveals the secret. In other words, for any secret behavior of the system, there is a non-secret behavior of the system that looks the same to the intruder.

  If the secret is modeled as a set of states, the opacity is referred to as state-based. Bryans et al.~\cite{Bryans2005} introduced state-based opacity for systems modeled by Petri nets, Saboori and Hadjicostis~\cite{SabooriHadjicostis2007} adapted it to (stochastic) automata, and Bryans et al.~\cite{BryansKMR08} generalized it to transition systems.
  If the secret is modeled as a set of behaviors, the opacity is referred to as language-based. Language-based opacity was introduced by Badouel et al.~\cite{Badouel2007} and Dubreil et al.~\cite{Bubreil2008}.
  For more details, we refer the reader to the overview by Jacob et al.~\cite{JacobLF16}.

  Several notions of opacity have been introduced in the literature. In this paper, we are interested in the notions of current-state opacity (CSO), initial-state opacity (ISO), initial-and-final-state opacity (IFO), language-based opacity (LBO), K-step opacity (K-SO), and infinite-step opacity (INSO).
  Current-state opacity is the property that the intruder can never decide whether the system is currently in a secret state.
  Initial-state opacity is the property that the intruder can never reveal whether the computation started in a secret state.
  Initial-and-final-state opacity of Wu and Lafortune~\cite{WuLafortune2013} is a generalization of both, where the secret is represented as a pair of an initial and a marked state. Consequently, initial-state opacity is a special case of initial-and-final-state opacity where the marked states do not play a role, and current-state opacity is a special case where the initial states do not play a role.

  While initial-state opacity prevents the intruder from revealing, at any time during the computation, whether the system started in a secret state, current-state opacity prevents the intruder only from revealing whether the current state of the system is a secret state.
  However, it may happen that the intruder realizes in the future that the system was in a secret state at some former point of the computation. For instance, if the intruder estimates that the system is in one of two possible states and, in the next step, the system proceeds by an observable event that is possible only from one of the states, then the intruder reveals the state in which the system was one step ago.

  This issue has been considered in the literature and led to the notions of K-step opacity (K-SO) and infinite-step opacity (INSO) introduced by Saboori and Hadjicostis~\cite{SabooriHadjicostis2007,SabooriH12a}. While K-step opacity requires that the intruder cannot ascertain the secret in the current and $K$ subsequent states, infinite-step opacity requires that the intruder can never ascertain that the system was in a secret state. Notice that 0-step opacity coincides with current-state opacity by definition, and that an $n$-state automaton is infinite-step opaque if and only if it is $(2^n-2)$-step opaque~\cite{Yin2017}.

  Comparing different notions of opacity for automata models, Saboori and Hadjicostis~\cite{Saboori2008} provided a language-based definition of initial-state opacity, Cassez et al.~\cite{Cassez2012} transformed language-based opacity to current-state opacity, and Wu and Lafortune showed that current-state opacity, initial-and-final-state opacity, and language-based opacity can be transformed to each other. They further provided transformations of initial-state opacity to language-based opacity and to initial-and-final-state opacity, and, for prefix-closed languages, a transformation of language-based opacity to initial-state opacity.

  In this paper, we extend these results by showing that, for automata models, all the discussed notions of opacity are transformable to each other. As well as the existing transformations, our transformations are computable in polynomial time and preserve the number of observable events and determinism (whenever it is meaningful). In more detail, the transformations of Wu and Lafortune~\cite{WuLafortune2013} preserve the determinism of transitions, but result in automata with a set of initial states. This issue can, however, be easily fixed by adding a new initial state, connecting it to the original initial states by new unobservable events, and making the original initial states non-initial.
  We summarize our results, together with the existing results, in Fig.~\ref{fig:opacity-overview}.

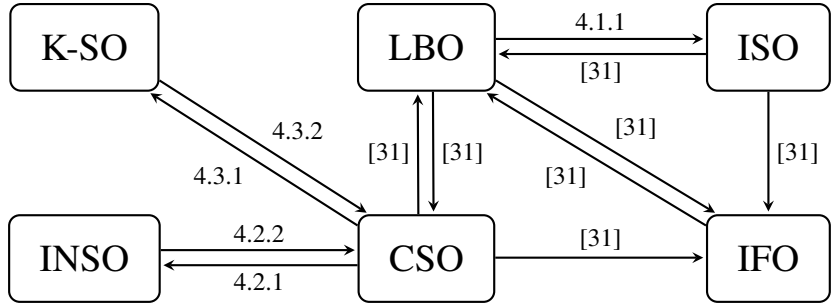
\begin{figure}
  \centering
  \begin{tikzpicture}[baseline,->,>=stealth,auto,shorten >=1pt,node distance=4.5cm,font=\Large,
    state/.style={rectangle,rounded corners,minimum size=6mm,inner sep=4mm,draw=black,thick,initial text=}]
    \node[state] (lbo)               {LBO};
    \node[state] (kso) [left of=lbo,node distance=4.5cm]   {K-SO};
    \node[state] (iso) [right of=lbo]  {ISO};
    \node[state] (cso) [below of=lbo,node distance=2.8cm]  {CSO};
    \node[state] (inso) [left of=cso,node distance=4.5cm]   {INSO};
    \node[state] (ifo) [right of=cso]  {IFO};

    \path[thick]
    ([xshift=10mm] lbo.south west) edge node[font=\normalsize] {\cite{WuLafortune2013}} ([xshift=10mm] cso.north west)
    ([xshift=8mm] cso.north west) edge node[font=\normalsize] {\cite{WuLafortune2013}} ([xshift=8mm] lbo.south west)
    ([yshift=7mm] lbo.south east) edge node[font=\normalsize] {\ref{redLBOtoISO}} ([yshift=7mm] iso.south west)
    ([yshift=5mm] iso.south west) edge node[font=\normalsize] {\cite{WuLafortune2013}} ([yshift=5mm] lbo.south east)

    ([yshift=7mm] inso.south east) edge node[font=\normalsize] {\ref{redINSOtoCSO}} ([yshift=7mm] cso.south west)
    ([yshift=5mm] cso.south west) edge node[font=\normalsize] {\ref{redCSOtoINSO}} ([yshift=5mm] inso.south east)
    ([yshift=6mm] cso.south east) edge node[font=\normalsize] {\cite{WuLafortune2013}} ([yshift=6mm] ifo.south west)

    ([yshift=1.5mm] lbo.south east) edge node[font=\normalsize] {\cite{WuLafortune2013}} ([xshift=1.5mm] ifo.north west)
    ([yshift=-1.5mm] ifo.north west) edge node[font=\normalsize] {\cite{WuLafortune2013}} ([xshift=-1.5mm] lbo.south east)

    ([yshift=1.5mm] kso.south east) edge node[font=\normalsize] {\ref{redKSOtoCSO}} ([xshift=1.5mm] cso.north west)
    ([yshift=-1.5mm] cso.north west) edge node[font=\normalsize] {\ref{redCSOtoKSO}} ([xshift=-1.5mm] kso.south east)
    (iso) edge node[font=\normalsize] {\cite{WuLafortune2013}} (ifo);
  \end{tikzpicture}
  \caption{Overview of the transformations among the notions of opacity for automata models.}
  \label{fig:opacity-overview}
\end{figure}

  There are two immediate applications of the transformations. First, the transformations provide a deeper understanding of the differences among the opacity notions from the structural point of view. For instance, the reader may deduce from the transformations that, for prefix-closed languages, the notions of language-based opacity, initial-state opacity, and current-state opacity coincide, or that to transform current-state opacity to infinite-step opacity means to add only a single state and a few transitions.

  Second, the transformations provide a tool to obtain the complexity results for all the discussed opacity notions by studying just one of the notions. For an illustration, consider, for instance, our recent result showing that deciding current-state opacity for systems modeled by DFAs with three events, one of which is unobservable, is \PSpace-complete~\cite{BalunMasopustIFACWC2020}. Since we can transform the problems of deciding current-state opacity and of deciding infinite-step opacity to each other in polynomial time, preserving determinism and the number of observable events, we obtain that deciding infinite-step opacity for systems modeled by DFAs with three events, one of which is unobservable, is \PSpace-complete as well. In particular, combining the transformations with known results~\cite{BalunMasopustIFACWC2020,JacobLF16}, we obtain a complete complexity picture of the verification of the discussed notions of opacity as summarized in Table~\ref{table01}.

  The fact that checking opacity for DESs is \PSpace-complete was known for some of the considered notions~\cite{JacobLF16}. In particular, deciding current-state opacity, initial-state opacity, and language-based opacity were known to be \PSpace-complete, deciding K-step opacity was known to be \NP-hard, and deciding infinite-step opacity was known to be \PSpace-hard.
  Complexity theory tells us that any two \PSpace-complete problems can be transformed to each other in polynomial time. In other words, it gives the existence of polynomial transformations between the notions of opacity for which the verification is \PSpace-complete. However, the theory and the \PSpace-hardness proofs presented in the literature do not give a clue how to obtain these transformations. Therefore, from the complexity point of view, our contribution is not the existence of the transformations, but the construction of specific transformations. Since the presented transformations preserve determinism and the number of observable events, they allow us to present stronger results than those known in the literature~\cite{JacobLF16} that we summarize in Table~\ref{table01}.

  The transformations further allow us to improve the algorithmic complexity of deciding language-based opacity, infinite-step opacity, and K-step opacity, although we do not use the transformations themselves, but rather the deeper insight into the problems they provide. For language-based opacity, Lin~\cite{Lin2011} suggested an algorithm with complexity $O(2^{2n})$, where $n$ is the number of states of the input automaton. In this paper, we improve this complexity to $O((n + m\ell) 2^n)$, where $\ell=|\Sigma_o|$ is the number of observable events and $m \le \ell n^2$ is the number of transitions in the projected automaton of the input automaton. For infinite-step opacity and K-step opacity, the latest results are by Yin and Lafortune~\cite{Yin2017} who designed an algorithm for checking infinite-step opacity with complexity $O(\ell 2^{2n})$, and an algorithm for checking K-step opacity with complexity $O(\min\{\ell 2^{2n},\ell^{K+1} 2^{n}\})$. In this paper, we suggest a new algorithm for deciding infinite-step opacity with complexity $O((n + m\ell) 2^n)$, and a new algorithm for checking K-step opacity with complexity $O((K+1)2^n (n + m\ell^2))$. Notice that $K$ is bounded by $2^n-2$, since an $n$-state automaton is infinite-step opaque if and only if it is $(2^n-2)$-step opaque~\cite{Yin2017}. Consequently, our algorithm improves the complexity if $K$ is either very large (larger than $2^n-2$) or polynomial with respect to $n$; otherwise, the two-way observer technique of Yin and Lafortune~\cite{Yin2017} is more efficient, and it is a challenging open problem whether its complexity can be further improved.
  All our results are summarized in Table~\ref{table01}.

  \begin{table}\centering
    \ra{1.2}
    \begin{tabular}{@{}llll@{}}\toprule
              Opacity notion
            & $|\Sigma_o|=1$
            & $|\Sigma_o|\ge 2$
            & Order\\
          \midrule
          CSO       & \coNP-complete~\cite{BalunMasopustIFACWC2020}
                    & \PSpace-complete~\cite{BalunMasopustIFACWC2020}
                    & $O(\ell 2^{n})$~\cite{Saboori2011}\\
          LBO       & \coNP-complete
                    & \PSpace-complete
                    & $O((n+m\ell) 2^n)$ (Thm~\ref{lboAlgo})\\
          ISO       & \NL-complete (Thm~\ref{NLunaryISO})
                    & \PSpace-complete
                    & $O(\ell 2^{n})$~\cite{WuLafortune2013}\\
          IFO       & \coNP-complete
                    & \PSpace-complete
                    & $O(\ell 2^{2n})$~\cite{WuLafortune2013}\\
          K-SO      & \coNP-complete
                    & \PSpace-complete
                    & $O((K+1)2^n (n + m \ell^2))$ (Sec~\ref{ksoAlgo})
                    \\
          INSO      & \coNP-complete
                    & \PSpace-complete
                    & $O((n+m\ell) 2^n)$ (Sec~\ref{insoAlgo}) \\
      \bottomrule
    \end{tabular}
      \caption{Complexity of verifying the notions of opacity for DESs with $\Sigma_o$ being the set of observable events following from the transformations and known results; $n$ stands for the number of states of the input automaton, $\ell$ for the number of observable events of the input automaton, and $m \le \ell n^2$ for the number of transitions in the projected automaton of the input automaton.}
      \label{table01}
  \end{table}

\section{Preliminaries}
  We assume that the reader is familiar with the basic notions of automata theory~\cite{Lbook}.
  For a set $S$, $|S|$ denotes the cardinality of $S$, and $2^{S}$ the power set of $S$. Let $\mathbb{N}$ denote the set of all non-negative integers. An alphabet $\Sigma$ is a finite nonempty set of events. A string over $\Sigma$ is a sequence of events from $\Sigma$. Let $\Sigma^*$ denote the set of all finite strings over $\Sigma$; the empty string is denoted by $\varepsilon$. A language $L$ over $\Sigma$ is a subset of $\Sigma^*$. The set of all prefixes of strings of $L$ is the set $\overline{L}=\{u \mid \text{there is } v\in \Sigma^* \text{ such that } uv \in L\}$. For a string $u \in \Sigma^*$, $|u|$ denotes the length of $u$, and $\overline{u}$ denotes the set of all prefixes of $u$.

  A {\em nondeterministic finite automaton\/} (NFA) over an alphabet $\Sigma$ is a structure $\G = (Q,\Sigma,\delta,I,F)$, where $Q$ is a finite  set of states, $I\subseteq Q$ is a set of initial states, $F \subseteq Q$ is a set of marked states, and $\delta \colon Q\times\Sigma \to 2^Q$ is a transition function that can be extended to the domain $2^Q\times\Sigma^*$ by induction. To simplify our proofs, we use the notation $\delta(Q,S) = \cup_{s\in S}\,\delta(Q, s)$, where $S\subseteq \Sigma^*$.
  For a set of states $Q_0\subseteq Q$, the language marked by $\G$ from the states of $Q_0$ is the set $L_m(\G,Q_0) = \{w\in \Sigma^* \mid \delta(Q_0,w)\cap F \neq\emptyset\}$, and the language generated by $\G$ from the states of $Q_0$ is the set $L(\G,Q_0) = \{w\in \Sigma^* \mid \delta(Q_0,w)\neq\emptyset\}$. The language marked by $\G$ is then $L_m(\G)=L_m(\G,I)$, and the language generated by $\G$ is $L(\G)=L(\G,I)$.
  The NFA $\G$ is {\em deterministic\/} (DFA) if $|I|=1$ and $|\delta(q,a)|\le 1$ for every $q\in Q$ and $a \in \Sigma$.
  An automaton $\G$ is {\em non-blocking\/} if $\overline{L_m(\G)} = L(\G)$.

  A {\em discrete-event system\/} (DES) $G$ over $\Sigma$ is an NFA together with the partition of the alphabet $\Sigma$ into two disjoint subsets $\Sigma_o$ and $\Sigma_{uo}=\Sigma\setminus\Sigma_o$ of {\em observable\/} and {\em unobservable events}, respectively. In the case where all states of the automaton are marked, we simply write $G=(Q,\Sigma,\delta,I)$ without specifying the set of marked states.

  When discussing the state estimation properties, the literature often studies deterministic systems with a set of initial states. Such systems are known as deterministic DES and defined as a DFA with several initial states; namely, a {\em deterministic\/} DES is an NFA $\G=(Q,\Sigma,\delta,I,F)$, where $|\delta(q,a)|\le 1$ for every $q\in Q$ and $a \in \Sigma$.

  The opacity property is based on partial observations of events described by {\em projection\/} $P\colon \Sigma^* \to \Sigma_o^*$. The projection is a morphism defined by $P(a) = \varepsilon$ for $a\in \Sigma_{uo}$, and $P(a)= a$ for $a\in \Sigma_o$. The action of $P$ on a string $\sigma_1\sigma_2\cdots\sigma_n$, with $\sigma_i \in \Sigma$ for $1\le i\le n$, is to erase all events that do not belong to $\Sigma_o$, that is, $P(\sigma_1\sigma_2\cdots\sigma_n)=P(\sigma_1) P(\sigma_2) \cdots P(\sigma_n)$. The definition can be readily extended to languages.


  Let $G$ be a NFA over $\Sigma$, and let $P\colon \Sigma^* \to \Sigma_o^*$ be a projection. By the {\em projected automaton\/} of $G$, we mean the automaton $P(G)$ obtained from $G$ by replacing every transition $(p,a,q)$ by the transition $(p,P(a),q)$, and by eliminating the $\eps$-transitions. In particular, if $\delta$ is the transition function of $G$, then the transition function $\gamma$ of the automaton $P(G)$ is defined as $\gamma(q,a)=\hat{\delta}(q,a)$, where $\hat{\delta}\colon Q\times \Sigma^* \to 2^Q$ is the extension of $\delta$ to the domain $Q\times \Sigma^*$, that is, $\hat{\delta}(q,\eps)=\{q\}$ and $\hat{\delta}(q,wa)=\bigcup_{p\in\hat{\delta}(q,w)} \delta(p,a)$. Then $P(G)$ is an NFA over $\Sigma_o$, with the same set of states as $G$, that recognizes the language $P(L_m(G))$ and can be constructed in polynomial time~\cite{HopcroftU79}.
  The DFA constructed from $P(G)$ by the subset construction is called an {\em observer}~\cite{Lbook}. In the worst case, the observer has exponentially many states compared with the automaton $G$~\cite{JiraskovaM12,wong98}.

  A {\em decision problem\/} is a yes-no question. A decision problem is {\em decidable\/} if there is an algorithm that solves it. Complexity theory classifies decidable problems into classes based on the time or space an algorithm needs to solve the problem. The complexity classes we consider are \Log, \NL, \PTime, \NP, and \PSpace denoting the classes of problems solvable by a deterministic logarithmic-space, nondeterministic logarithmic-space, deterministic polynomial-time, nondeterministic polynomial-time, and deterministic polynomial-space algorithm, respectively. The hierarchy of classes is \Log $\subseteq$ \NL $\subseteq$ \PTime $\subseteq$ \NP $\subseteq$ \PSpace. Which of the inclusions are strict is an open problem. The widely accepted conjecture is that all are strict. A decision problem is \NL-complete (resp. \NP-complete, \PSpace-complete) if (i) it belongs to \NL (resp. \NP, \PSpace) and (ii) every problem from \NL (resp. \NP, \PSpace) can be reduced to it by a deterministic logarithmic-space (resp. polynomial-time) algorithm. Condition (i) is called {\em membership\/} and condition (ii) {\em hardness}.

\section{Notions of Opacity}
  In this section, we recall the definitions of the notions of opacity we discuss. The notion of initial-and-final-state opacity is recalled to make the paper self-contained.

  Current-state opacity asks whether the intruder cannot decide, at any instance of time, whether the system is currently in a secret state.

  \begin{definition}[Current-state opacity (CSO)]
    Given a DES $G=(Q,\Sigma,\delta,I)$, a projection $P\colon\Sigma^*\to \Sigma_o^*$, a set of secret states $Q_S \subseteq Q$, and a set of non-secret states $Q_{NS} \subseteq Q$. System $G$ is {\em current-state opaque\/} if for every string $w$ such that $\delta(I,w)\cap Q_S \neq \emptyset$, there exists a string $w'$ such that $P(w)=P(w')$ and $\delta(I,w')\cap Q_{NS} \neq \emptyset$.
  \end{definition}

  The definition of current-state opacity can be reformulated as a language inclusion as shown in the following lemma. This result is similar to that of Wu and Lafortune \cite{WuLafortune2013} used to transform current-state opacity to language-based opacity. We use this alternative definition to simplify proofs.
  \begin{lemma}[\cite{BalunMasopustIFACWC2020}]\label{CSOinclusion}
    Let $G=(Q,\Sigma,\delta,I)$ be a DES, $P\colon\Sigma^*\to \Sigma_o^*$ a projection, and $Q_S, Q_{NS} \subseteq Q$ sets of secret and non-secret states, respectively. 
    Let $G_{S}=(Q,\Sigma,\delta,I,Q_S)$ and $G_{NS}=(Q,\Sigma,\delta,I,Q_{NS})$, then $G$ is current-state opaque if and only if $L_m(P(G_{S})) \subseteq L_m(P(G_{NS}))$.
  \end{lemma}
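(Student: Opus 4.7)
The plan is to reduce the claim to a routine unfolding of definitions, the key fact being that the projected automaton $P(G)$ marks exactly the projection of the marked language of $G$.

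First, I would record the observation that for any DES $H=(Q,\Sigma,\delta,I,F)$, the projected automaton satisfies $L_m(P(H)) = P(L_m(H))$. This is immediate from the construction of $P(H)$: replacing each transition label $a$ by $P(a)$ and removing $\eps$-transitions leaves the set of initial and marked states unchanged, so a string $u\in\Sigma_o^*$ is accepted by $P(H)$ precisely when there is some $w\in\Sigma^*$ accepted by $H$ with $P(w)=u$. Applying this to $G_S$ and $G_{NS}$ yields
\[
L_m(P(G_S)) = P(L_m(G_S)) \quad\text{and}\quad L_m(P(G_{NS})) = P(L_m(G_{NS})).
\]

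Next, I would rewrite the marked languages in terms of state reachability: by the definitions of $G_S$ and $G_{NS}$,
\[
L_m(G_S) = \{w\in\Sigma^* \mid \delta(I,w)\cap Q_S \neq \emptyset\}, \qquad L_m(G_{NS}) = \{w\in\Sigma^* \mid \delta(I,w)\cap Q_{NS} \neq \emptyset\}.
\]
Hence $L_m(P(G_S)) \subseteq L_m(P(G_{NS}))$ is equivalent to the statement that for every $w$ with $\delta(I,w)\cap Q_S \neq \emptyset$ there exists $w'$ with $\delta(I,w')\cap Q_{NS} \neq \emptyset$ and $P(w)=P(w')$, which is literally the definition of current-state opacity.

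Both implications thus reduce to chasing these equivalences. The only subtlety is in the first step, justifying $L_m(P(H))=P(L_m(H))$, and even there the content is entirely standard. I expect the statement to be essentially a notational convenience rather than a substantive lemma, so the write-up will be short: one paragraph for the ``only if'' direction, invoking the two identities above and the definition of CSO, and one paragraph for the ``if'' direction, which is the same chain of equivalences read in reverse.
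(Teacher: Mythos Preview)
Your argument is correct: the identity $L_m(P(H))=P(L_m(H))$ is exactly what the paper states about projected automata, and once you have it the lemma is an immediate unfolding of the definition of current-state opacity. Note, however, that the paper does not actually prove this lemma; it is quoted from~\cite{BalunMasopustIFACWC2020} without proof, so there is no in-paper argument to compare against---your write-up is the natural one.
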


  The second notion of opacity under consideration is language-based opacity. Intuitively, a system is language-based opaque if for any string $w$ in the secret language, there exists a string $w'$ in the non-secret language with the same observation $P(w)=P(w')$. In this case, the intruder cannot conclude whether the secret string $w$ or the non-secret string $w'$ has occurred. We recall the most general definition by Lin~\cite{Lin2011}.
  \begin{definition}[Language-based opacity (LBO)]
    Given a DES $G=(Q,\Sigma,\delta,I)$, a projection $P\colon\Sigma^*\to \Sigma_o^*$, a secret language $L_S \subseteq L(G)$, and a non-secret language $L_{NS} \subseteq L(G)$. System $G$ is {\em language-based opaque\/} if $L_S \subseteq P^{-1}P(L_{NS})$.
  \end{definition}

  It is worth mentioning that the secret and non-secret languages are often considered to be regular; and we consider it as well. The reason is that, for non-regular languages, the inclusion problem is undecidable; see Asveld and Nijholt~\cite{AsveldN00} for more details.

  The third notion is the notion of initial-state opacity. Initial-state opacity asks whether the intruder can never reveal whether the computation started in a secret state.
  \begin{definition}[Initial-state opacity (ISO)]
    Given a DES $G=(Q,\Sigma,\delta,I)$, a projection $P\colon \Sigma^*\rightarrow \Sigma_o^*$, a set of secret initial states $Q_S\subseteq I$, and a set of non-secret initial states $Q_{NS}\subseteq I$. System $G$ is \textit{initial-state opaque\/} with respect to $Q_S$, $Q_{NS}$ and $P$ if for every $w \in L(G,Q_S)$, there exists $w' \in L(G,Q_{NS})$ such that $P(w) = P(w')$.
  \end{definition}

  The fourth notion is the notion of initial-and-final-state opacity of Wu and Lafortune~\cite{WuLafortune2013}. Initial-and-final-state opacity is a generalization of both current-state opacity and initial-state opacity, where the secret is represented as a pair of an initial and a marked state. Consequently, initial-state opacity is a special case of initial-and-final-state opacity where the marked states do not play a role, and current-state opacity is a special case where the initial states do not play a role.

  \begin{definition}[Initial-and-final-state opacity (IFO)]
    Given a DES $G=(Q,\Sigma,\delta,I)$, a projection $P\colon \Sigma^*\rightarrow \Sigma_o^*$, a set of secret state pairs $Q_S\subseteq I \times Q$, and a set of non-secret state pairs $Q_{NS}\subseteq I \times Q$. System $G$ is \textit{initial-and-final-state opaque\/} with respect to $Q_S$, $Q_{NS}$ and $P$ if for every secret pair $(q_0,q_f) \in Q_S$ and every $w \in L(G,q_0)$ such that $q_f \in \delta(q_0, w)$, there exists $(q_0',q_f') \in Q_{NS}$ and $w' \in L(G,q_0')$ such that $q_f' \in \delta(q_0', w')$ and $P(w) = P(w')$.
  \end{definition}

  The fifth notion is the notion of K-step opacity. K-step opacity is a generalization of current-state opacity requiring that the intruder cannot reveal the secret in the current and $K$ subsequent states. By definition, current-state opacity is equivalent to 0-step opacity.
  We slightly generalize and reformulate the definition of Saboori and Hadjicostis~\cite{SabooriH12a}.
  \begin{definition}[K-step opacity (K-SO)]
    Given a system $G=(Q,\Sigma,\delta,I)$, a projection $P \colon \Sigma^* \to \Sigma_o^*$, a set of secret states $Q_S\subseteq Q$, a set of non-secret states $Q_{NS}\subseteq Q$, and a non-negative integer $K\in\mathbb{N}$. System $G$ is \textit{K-step opaque} with respect to $Q_S$, $Q_{NS}$, and $P$ if for every string $st \in L(G)$ such that $|P(t)| \leq K$ and $\delta(\delta(I, s)\cap Q_S, t) \neq \emptyset$, there exists a string $s't' \in L(G)$ such that $P(s)=P(s')$, $P(t)=P(t')$, and $\delta (\delta(I,s')\cap Q_{NS}, t') \neq \emptyset$.
  \end{definition}

  Finally, the last notion we consider is the notion of infinite-step opacity. Infinite-step opacity is a further generalization of K-step opacity by setting $K$ being infinity. Actually, Yin and Lafortune~\cite{Yin2017} have shown that an $n$-state automaton is infinite-step opaque if and only if it is $(2^n-2)$-step opaque.
  Again, we slightly generalize and reformulate the definition of Saboori and Hadjicostis~\cite{SabooriH11}.
  \begin{definition}[Infinite-step opacity (INSO)]
    Given a system $G=(Q,\Sigma,\delta,I)$, a projection $P\colon \Sigma^*\rightarrow \Sigma_o^*$, a set of secret states $Q_S\subseteq Q$, and a set of non-secret states $Q_{NS}\subseteq Q$. System $G$ is \textit{infinite-step opaque} with respect to $Q_S$, $Q_{NS}$ and $P$ if for every string $st \in L(G)$ such that $\delta(\delta(I, s)\cap Q_S, t) \neq \emptyset$, there exists a string $s't' \in L(G)$ such that $P(s)=P(s')$, $P(t)=P(t')$, and $\delta (\delta(I,s')\cap Q_{NS}, t') \neq \emptyset$.
  \end{definition}

\section{Transformations}
  Although some of the transformations were previously known in the literature, Wu and Lafortune~\cite{WuLafortune2013} were first who studied the transformations systematically. In particular, they provided polynomial-time transformations among current-state opacity, language-based opacity, initial-state opacity, and initial-and-final-state opacity, see Fig.~\ref{fig:opacity-overview}. Inspecting the reductions, it can be seen that after  eliminating the unnecessary \texttt{Trim} operations, the transformations use only logarithmic space, preserve the number of observable events, and determinism (whenever it is meaningful). As we already pointed out, the transformations of Wu and Lafortune~\cite{WuLafortune2013} preserve the determinism of transitions, but they admit a set of initial states. This issue can, however, be easily eliminated by adding a new initial state, connecting it to the original initial states by new unobservable events, and making the original initial states non-initial.
  However, their transformation from language-based opacity to initial-state opacity is restricted only to the case where the secret and non-secret languages of the language-based opacity problem are prefix closed.

  We complete the polynomial-time transformations among all the discussed notions of opacity. In particular, we provide a general transformation from language-based opacity to initial-state opacity in Section~\ref{redLBOtoISO}, transformations between infinite-step opacity and current-state opacity in Section~\ref{redINSO-CSO}, and transformations between K-step opacity and current-state opacity in Section~\ref{redKSO-CSO}.
  All the transformations preserve the number of observable events and determinism. Except for a few exceptions, the transformations need only logarithmic space.
  Our results are summarized in Fig.~\ref{fig:opacity-overview} with references to the corresponding sections.

  The following auxiliary lemma states that we can reduce the number of observable events in DESs with at least three observable events without affecting current-state opacity and initial-state opacity of the DES. We make use of this lemma to preserve the number of observable events in cases where we introduce new  observable events in our reductions, namely in Sections~\ref{redLBOtoISO}, \ref{redINSOtoCSO}, and~\ref{redKSOtoCSO}.

  \begin{figure}
    \centering
    \includegraphics[scale=1]{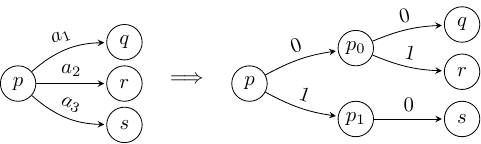}
    \caption{The replacement of three observable events $\{a_1,a_2,a_3\}$ with the encoding $e(a_1)=00$, $e(a_2)=01$, and $e(a_3)=10$, and new states $p_0$ and $p_1$.}
    \label{binEncoding}
  \end{figure}

  \begin{lemma}\label{binEnc}
    Let $G=(Q,\Sigma,\delta,I,F)$ be an NFA, and let $\Gamma_o\subseteq\Sigma_o$ contain at least three events. Let $G'=(Q',(\Sigma-\Gamma_o)\cup\{0,1\},\delta',I,F)$ be an NFA obtained from $G$ as follows. Let $k=\lceil \log_2(|\Gamma_o|) \rceil$, and let $e\colon \Gamma_o \to \{0,1\}^k$ be a binary encoding of the events of $\Gamma_o$. We replace every transition $(p,a,q)$  with $a\in \Gamma_o$ by $k$ transitions
    \[
      (p,b_1,p_{b_1}), (p_{b_1},b_2,p_{b_1b_2}),\ldots, (p_{b_1\cdots b_{k-1}},b_k,q)
    \]
    where $e(a)=b_1b_2\cdots b_k \in \{0,1\}^k$, and $p_{b_1},\ldots,p_{b_1\cdots b_{k-1}}$ are states that are added to the state set of $G'$. Notice that these states are neither secret nor non-secret and that, to preserve determinism, they are newly created when they are needed for the first time, and reused when they are needed later during the replacements, cf.~Fig.~\ref{binEncoding} illustrating a replacement of three observable events $\{a_1,a_2,a_3\}$ with the encoding $e(a_1)=00$, $e(a_2)=01$, and $e(a_3)=10$.
    Then $G$ is current-state (initial-state) opaque with respect to $Q_S$, $Q_{NS}$, and $P\colon \Sigma^*\to\Sigma_o^*$ if and only if $G'$ is current-state (initial-state) opaque with respect to $Q_{S}$, $Q_{NS}$, and $P'\colon [(\Sigma-\Gamma_o)\cup\{0,1\}]^* \to [(\Sigma_o-\Gamma_o)\cup \{0,1\}]^*$.
  \end{lemma}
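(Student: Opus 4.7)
The plan is to set up a tight correspondence between computations of $G$ and computations of $G'$ via the natural encoding map, and then transfer opacity (either CSO or ISO) along this bijection.

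First, I define an encoding $E\colon \Sigma^* \to [(\Sigma-\Gamma_o)\cup\{0,1\}]^*$ by $E(\sigma)=e(\sigma)$ for $\sigma\in\Gamma_o$ and $E(\sigma)=\sigma$ otherwise, extended to strings by concatenation. The construction of $G'$ is so designed that for every $q \in Q$ (i.e.\ an original state of $G$, not an intermediate $p_{b_1\cdots b_i}$), we have $q\in\delta(p,w)$ in $G$ if and only if $q\in\delta'(p,E(w))$ in $G'$; the intermediate states can only be reached in the middle of reading an encoded block $e(a)$. I would verify this by a straightforward induction on $|w|$, using that the chains of $k$ new transitions inserted for each $a\in\Gamma_o$ begin and end precisely at the endpoints of the original $a$-transition. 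A dual statement is also needed: any string $w'\in L(G',I)$ with $\delta'(I,w')\cap Q\neq\emptyset$ must leave every intermediate chain it enters, hence $w'=E(w)$ for a unique $w\in L(G,I)$; this follows because from an intermediate state $p_{b_1\cdots b_i}$ the only outgoing transitions are the next step of some encoding chain.

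Next, I would show that $P$ and $P'$ commute with $E$, i.e.\ $P'(E(w))=E(P(w))$ for all $w\in\Sigma^*$. This is immediate from the definitions: events in $\Sigma-\Sigma_o$ are unobservable under both projections and left untouched by $E$; events in $\Sigma_o-\Gamma_o$ are observable under both and unchanged by $E$; and each $a\in\Gamma_o$ is kept by $P$ and sent by $E$ to $e(a)\in\{0,1\}^k$, whose letters are kept by $P'$.

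With these two ingredients, the equivalence is essentially bookkeeping. For CSO, assume $G$ is current-state opaque and let $w'\in L(G',I)$ with $\delta'(I,w')\cap Q_S\neq\emptyset$. Since $Q_S\subseteq Q$, the string $w'$ ends in an original state, so $w'=E(w)$ with $\delta(I,w)\cap Q_S\neq\emptyset$; opacity of $G$ yields $v$ with $P(w)=P(v)$ and $\delta(I,v)\cap Q_{NS}\neq\emptyset$, and then $v':=E(v)$ satisfies $P'(w')=E(P(w))=E(P(v))=P'(v')$ and $\delta'(I,v')\cap Q_{NS}\neq\emptyset$. The converse direction is the same argument applied to encoded strings, using that any candidate $v'$ witnessing opacity in $G'$ must reach $Q_{NS}\subseteq Q$ and hence decodes to some $v$ in $G$. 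The ISO case is completely analogous, using $L(G,Q_S)$ and $L(G,Q_{NS})$ in place of reaching $Q_S$ resp.\ $Q_{NS}$.

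The only subtlety, and the one place where a little care is required, is handling the shared intermediate states $p_{b_1\cdots b_i}$: because they are reused across encodings of different events in $\Gamma_o$, a computation of $G'$ that visits such a state from one encoding chain might a priori exit along a different chain. I would rule this out by noting that the reuse is only among those outgoing symbols that actually extend the prefix $b_1\cdots b_i$ consistently, so every complete chain of length $k$ reconstructs a unique event of $\Gamma_o$ and thus a legitimate transition of $G$; this is precisely what makes the decoding of $w'$ to $w$ well-defined and guarantees that determinism is preserved.
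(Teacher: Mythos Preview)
Your approach is correct and essentially identical to the paper's: both introduce the encoding morphism (the paper calls it $f$), use the correspondence between runs of $G$ and runs of $G'$ ending in original states together with the commutation $P'\circ f=f\circ P$, and transfer the language inclusions that characterize opacity. The only cosmetic difference is that the paper invokes Lemma~\ref{CSOinclusion} to recast CSO as the inclusion $P(L_S)\subseteq P(L_{NS})$ rather than unwinding the definition as you do; for ISO both you and the paper say ``analogous'' and leave implicit the easy extra step of completing a trailing partial bit-block and using prefix-closedness of the generated languages.
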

  \begin{proof}
    To show that $G$ is current-state opaque if and only if $G'$ is current-state opaque, we define the languages $L_S=L_m(Q,\Sigma,\delta,I,Q_S)$, $L_{NS}=L_m(Q,\Sigma,\delta,I,Q_{NS})$, $L'_S=L_m(Q',(\Sigma-\Gamma_o)\cup\{0,1\},\delta',I,Q_S)$, and $L'_{NS}=L_m(Q',(\Sigma-\Gamma_o)\cup\{0,1\},\delta',I,Q_{NS})$. Using Lemma~\ref{CSOinclusion}, we now need to show that $P(L_{S})\subseteq P(L_{NS})$ if and only if $P'(L'_{S})\subseteq P'(L'_{NS})$. To this end, we define a morphism $f\colon \Sigma^* \to ((\Sigma-\Gamma_o)\cup\{0,1\})^*$ so that $f(a)=e(a)$ for $a\in \Gamma_o$, and $f(a)=a$ for $a\in\Sigma-\Gamma_o$. By the definition of $e$ and the construction of $G'$, for any string $w$, we have that $w \in L(G)$ if and only if $f(w) \in L(G')$. In particular, $P(w)\in P(L_{S})$ if and only if $P'(f(w)) \in P'(L'_{S})$, and $P(w)\in P(L_{NS})$ if and only if $P'(f(w)) \in P'(L'_{NS})$, which completes this part of the proof.

    To show that $G$ is initial-state opaque if and only if $G'$ is initial-state opaque, we define the languages $L_S=L(Q,\Sigma,\delta,Q_S)$, $L_{NS}=L(Q,\Sigma,\delta,Q_{NS})$, $L'_S=L(Q',(\Sigma-\Gamma_o)\cup\{0,1\},\delta',Q_S)$, and $L'_{NS}=L(Q',(\Sigma-\Gamma_o)\cup\{0,1\},\delta',Q_{NS})$. Since this transforms initial-state opacity to language-based opacity~\cite{WuLafortune2013}, it is sufficient to show that $P(L_{S})\subseteq P(L_{NS})$ if and only if $P'(L'_{S})\subseteq P'(L'_{NS})$. However, this can be shown analogously as above.
  \qed\end{proof}

  Notice that this binary encoding can be done in polynomial time, and that it preserves determinism.

\subsection{Transformations between LBO and ISO}
  In this section, we discuss the transformations between language-based opacity and initial-state opacity. The transformation from initial-state opacity to language-based opacity has been provided by Wu and Lafortune~\cite{WuLafortune2013}, as well as the transformation from language-based opacity to initial-state opacity for the case where both the secret and the non-secret language of the language-based opacity problem are prefix closed. We now extend the transformation from language-based opacity to initial-state opacity to the general case.

\subsubsection{Transforming LBO to ISO}\label{redLBOtoISO}
  The language-based opacity problem consists of a DES $G_{LBO}$ over $\Sigma$, a projection $P\colon\Sigma^*\to \Sigma_o^*$, a secret language $L_S \subseteq L(G)$, and a non-secret language $L_{NS} \subseteq L(G)$. We transform it to a DES $G_{ISO}$ in such a way that $G_{LBO}$ is language-based opaque if and only if $G_{ISO}$ is initial-state opaque.

  Assume that the languages $L_S$ and $L_{NS}$ are represented by the non-blocking automata $A_S=(Q_S,\Sigma_S, \delta_S, I_S, F_S)$ and $A_{NS}=(Q_{NS},\Sigma_{NS}, \delta_{NS}, I_{NS}, F_{NS})$, respectively. Without loss of generality, we may assume that their sets of states are disjoint, that is, $Q_{S}\cap Q_{NS} = \emptyset$.

  Our transformation proceeds in two steps:
  \begin{enumerate}
   \item We construct a DES $G_{ISO}$ with one additional observable event $@$.
   \item We use Lemma~\ref{binEnc} to reduce the number of observable events of $G_{ISO}$ by one.
  \end{enumerate}

  \begin{figure}
    \begin{minipage}{\textwidth}
      \centering
      \includegraphics[align=c,scale=.89]{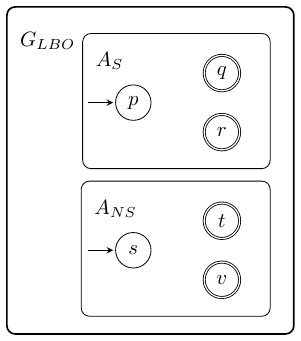}
      \quad $\Longrightarrow$ \quad
      \includegraphics[align=c,scale=.89]{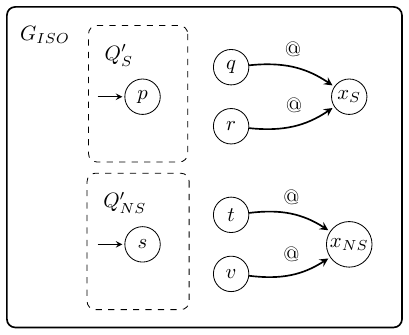}
    \end{minipage}
    \caption{Transforming LBO to ISO.}
    \label{fig:lbo-so}
  \end{figure}

  Since the second step follows from Lemma~\ref{binEnc}, we only describe the first step, that is, the construction of $G_{ISO}$ over $\Sigma\cup\{@\}$, and the specification of the sets of secret states $Q'_{S}$ and non-secret states $Q'_{NS}$.
  From the automata $A_{S}$ and $A_{NS}$, we construct the automata $G_S=(Q_S\cup \{x_S\},\Sigma_S, \delta_S, I_S,Q_S\cup \{x_S\})$ and $G_{NS}=(Q_{NS}\cup \{x_{NS}\},\Sigma_{NS}, \delta_{NS}, I_{NS},Q_{NS}\cup \{x_{NS}\})$ by adding two new states $x_{S}$ and $x_{NS}$, and the following transitions, see Fig.~\ref{fig:lbo-so} for an illustration of the construction:
  \begin{itemize}
    \item for every state $q\in F_S$, we add a new transition $(q,@,x_S)$ to $\delta_S$;
    \item for every state $q\in F_{NS}$, we add a new transition $(q,@,x_{NS})$ to $\delta_{NS}$.
  \end{itemize}

  Let $Q'_S = I_S$ denote the set of secret initial states of $G_{ISO}$, and let $Q'_{NS} = I_{NS}$ denote the set of non-secret initial states of $G_{ISO}$. We extend projection $P$ to $P'\colon (\Sigma\cup\{@\})^* \to (\Sigma_o\cup\{@\})^*$.
  Finally, let $G_{ISO}$ denote the automata $G_S$ and $G_{NS}$ considered as a single NFA.
  Before we show that $G_{LBO}$ is language-based opaque if and only if $G_{ISO}$ is initial-state opaque, notice that the transformation can be done in polynomial time and that it preserves determinism.

  \begin{theorem}\label{thm:lbo-iso}
    The DES $G_{LBO}$ is language-based opaque with respect to $L_{S}$, $L_{NS}$, and $P$ if and only if the DES $G_{ISO}$ is initial-state opaque with respect to $Q'_S$, $Q'_{NS}$, and $P'$.
  \end{theorem}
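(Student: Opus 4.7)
The plan is to first characterize the languages generated by $G_S$ and $G_{NS}$ from their initial states and then verify the two implications directly. Since $A_S$ and $A_{NS}$ are non-blocking and the new event $@$ leads only from final states of $A_S$ (resp.\ $A_{NS}$) to the dead-end state $x_S$ (resp.\ $x_{NS}$), one has the clean descriptions
\[
  L(G_S, I_S) = \overline{L_S} \cup L_S\{@\}
  \qquad\text{and}\qquad
  L(G_{NS}, I_{NS}) = \overline{L_{NS}} \cup L_{NS}\{@\}.
\]
Because $Q_S \cap Q_{NS} = \emptyset$, these are exactly the languages generated by $G_{ISO}$ from $Q'_S$ and $Q'_{NS}$, and the symbol $@$ appears at most once in any such string, only as its last letter. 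I would record this as the first structural lemma.

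For the forward direction, I would assume $L_S \subseteq P^{-1}P(L_{NS})$, pick an arbitrary $w \in L(G_{ISO}, Q'_S)$, and split on whether $w$ ends with $@$. If $w = u@$ with $u \in L_S$, then the LBO assumption yields $v \in L_{NS}$ with $P(u) = P(v)$, so $v@ \in L(G_{ISO}, Q'_{NS})$ and $P'(v@) = P'(u@)$. If instead $w \in \overline{L_S}$, then $w$ is a prefix of some $u \in L_S$; LBO again furnishes $v \in L_{NS}$ with $P(u) = P(v)$, and taking the shortest prefix $v'$ of $v$ with $P(v') = P(w)$ gives a string in $\overline{L_{NS}} \subseteq L(G_{ISO}, Q'_{NS})$ with the required projection.

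For the converse, I would assume ISO and take any $w \in L_S$. Then $w@ \in L(G_{ISO}, Q'_S)$, so by ISO there is $w' \in L(G_{ISO}, Q'_{NS})$ with $P'(w') = P(w)@$. Since $@$ is observable and the structural lemma says $@$ can only occur as the final letter of a string in $L(G_{ISO}, Q'_{NS})$, the string $w'$ must have the shape $u@$ with $u \in L_{NS}$ and $P(u) = P(w)$, establishing $L_S \subseteq P^{-1}P(L_{NS})$.

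The main subtle point will be the prefix case in the forward direction: it relies essentially on $A_{NS}$ being non-blocking, so that every prefix of a string in $L_{NS}$ is itself generated by $A_{NS}$ and hence lives in $L(G_{ISO}, Q'_{NS})$. The role of the fresh symbol $@$ is the key design choice, as it forces any ISO-witness for a string encoding a secret membership ($w@$) to itself end in $@$ and therefore encode a non-secret membership. Beyond these two points the argument is routine bookkeeping, using that $@ \notin \Sigma$ so that $P'$ restricted to $\Sigma^*$ coincides with $P$.
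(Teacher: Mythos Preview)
Your proposal is correct and follows essentially the same approach as the paper: both hinge on the structural identity $L(G_S)=\overline{L_S}\cup L_S\{@\}$ (and likewise for $G_{NS}$), after which the equivalence is immediate. The paper phrases the remaining step as the language inclusion $P'(L(G_S))\subseteq P'(L(G_{NS}))$ and leaves the verification implicit, whereas you spell out the string-level case analysis; the content is the same.
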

  \begin{proof}
    We need to show that $P(L_S)\subseteq P(L_{NS})$ if and only if $P'(L(G_S))\subseteq P'(L(G_{NS}))$.
    However, by construction, $L(G_S) =\overline{L_S} \cup L_S@$ and $L(G_{NS})=\overline{L_{NS}} \cup L_{NS}@$, and hence  $P(L_S)\subseteq P(L_{NS})$ if and only if $P'(L(G_S))\subseteq P'(L(G_{NS}))$, which is if and only if $G_{ISO}$ is initial-state opaque.
\qed\end{proof}

  We now provide an illustrative example.
  \begin{example}\label{ex:lbo-iso}
    Let $G_1$ over $\Sigma=\{a,b,c\}$ depicted in Fig.~\ref{fig:lbo-iso-ex} (left) be the instance of the LBO problem with the secret language $L_S=abb^*$ and the non-secret language $L_{NS}=acb^*$. Our transformation of LBO to ISO then results in the DES $G_1'$ depicted in Fig.~\ref{fig:lbo-iso-ex} (right) with a new observable event $@$, a single secret initial state~1, and a single non-secret initial state~4.
    We distinguish two cases depending on whether event $c$ is observable or not.

    In the first case, we assume that event $c$ is unobservable. In this case, $G_1$ is language-based opaque, because $P(L_S) \subseteq P(L_{NS})$, and the reader can see that $P(L(G_1',1)) = \overline{abb^*@} \subseteq \overline{ab^*@} = P(L(G_1',4))$. Therefore, $G_1'$ is initial-state opaque.

    In the second case, we assume that event $c$ is observable. In this case, $G_1$ is not language-based opaque, because $ab\in P(L_S)$ whereas $ab\not\in P(L_{NS})$, and we can see that $ab\in L(G_1',1)$ and $ab\not\in L(G_1',4)$. Therefore, $G_1'$ is not initial-state opaque.
    \begin{figure}
      \begin{minipage}{\textwidth}
        \centering
        \includegraphics[align=c,scale=.80]{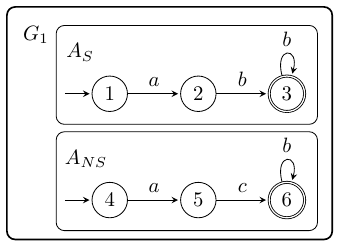}
        \quad $\Longrightarrow$ \quad
        \includegraphics[align=c,scale=.80]{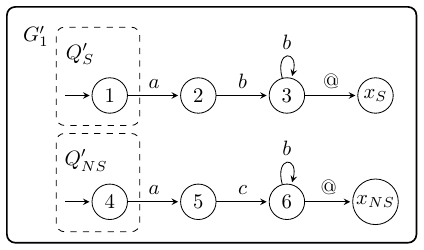}
      \end{minipage}
      \caption{An example of the transformation of the LBO problem (left) to the ISO problem (right).}
      \label{fig:lbo-iso-ex}
    \end{figure}
  \end{example}

\subsubsection{The case of a single observable event}
  The second step of our construction, Lemma~\ref{binEnc}, requires that $G_{ISO}$ has at least three observable events or, equivalently, that $G_{LBO}$ has at least two observable events. Consequently, our transformation does not preserve the number of observable events if $G_{LBO}$ has a single observable event. In fact, we show that there does not exist such a transformation unless \PTime $=$ \NP, which is a longstanding open problem of computer science. Deciding language-based opacity for systems with a single observable event is \coNP-complete~\cite{HolzerK11,StockmeyerM73}. We show that deciding initial-state opacity for systems with a single observable event is \NL-complete, and hence efficiently solvable on a parallel computer~\cite{AroraBarak2009}. In particular, the problem can be solved in polynomial time.

  \begin{theorem}\label{NLunaryISO}
    Deciding initial-state opacity for DESs with a single observable event is \NL-complete.
  \end{theorem}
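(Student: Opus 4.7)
The plan is to establish membership in \NL and \NL-hardness separately, exploiting the simplifications that arise from having only a single observable event.

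For membership, the starting observation is that, with a single observable event $o$, both $P(L(G,Q_S))$ and $P(L(G,Q_{NS}))$ are prefix-closed subsets of $\{o\}^*$, so each equals $\{o^k : 0\le k\le M\}$ for some $M \in \mathbb{N}\cup\{\infty\}$. Writing $M_S$ and $M_{NS}$ for the two maxima, initial-state opacity is equivalent to $M_S \le M_{NS}$. A strongly-connected-component argument bounds $M_Q$ by $n-1$ whenever it is finite (where $n=|Q|$): if no cycle reachable from $Q$ traverses $o$, then every $o$-transition must be inter-SCC, and the condensation is a DAG on at most $n$ nodes. Hence non-opacity is witnessed by some $k\in\{1,\ldots,n\}$ such that (i) some run from $Q_S$ uses at least $k$ observable events, and (ii) no run from $Q_{NS}$ uses at least $k$ observable events. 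A nondeterministic logarithmic-space machine for non-opacity guesses such a $k$ on $O(\log n)$ bits, verifies (i) by nondeterministically simulating an on-the-fly run of $G$ with an $O(\log n)$ counter, and verifies (ii) via the Immerman--Szelepcs\'enyi theorem (\NL $=$ \complclass{coNL}), since the negation of (ii) is plainly in \NL. One more application of this theorem places opacity itself in \NL.

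For \NL-hardness, I would reduce from directed $s$-$t$-non-reachability, which is \NL-complete by Immerman--Szelepcs\'enyi. Given an instance $(H,s,t)$ with $H=(V,E)$, construct a DES $G$ over $\Sigma=\{u,o\}$, where $u$ is unobservable and $o$ is observable, on the state set $V\cup\{t',z\}$: turn every edge of $H$ into an unobservable transition, add a single observable transition $t\xrightarrow{o}t'$, and leave $t'$ and $z$ without outgoing transitions. Setting $Q_S=\{s\}$ and $Q_{NS}=\{z\}$, one has $P(L(G,\{z\}))=\{\varepsilon\}$, while $P(L(G,\{s\}))=\{\varepsilon\}$ exactly when $t$ is unreachable from $s$ in $H$ and equals $\{\varepsilon,o\}$ otherwise. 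Consequently, $G$ is initial-state opaque iff $t$ is not reachable from $s$, yielding the required deterministic logarithmic-space reduction.

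The main obstacle I anticipate is justifying the polynomial-size bound on the witness $k$, namely the SCC-condensation bound $M_Q\le n-1$ for finite $M_Q$; once that is in place, the two appeals to \NL $=$ \complclass{coNL} close the membership side, and the hardness reduction is a routine encoding of graph reachability by an unobservable sub-automaton plus a single observable event at the target.
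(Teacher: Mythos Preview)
Your membership argument is essentially the paper's: both observe that the two projected languages are prefix-closed unary languages, hence determined by a single length parameter bounded polynomially in $n$, then guess a witnessing length and invoke $\NL=\complclass{coNL}$. You supply the SCC-condensation justification for the length bound that the paper simply asserts. (One tiny edge case: your witness range $k\in\{1,\dots,n\}$ misses the situation $Q_{NS}=\emptyset\neq Q_S$ with no observable event reachable from $Q_S$; taking $k\in\{0,\dots,n\}$, as the paper does, covers it.)

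The hardness reductions genuinely differ. The paper reduces from DAG reachability: every edge becomes an $a$-transition, self-loops on $a$ are added at $t$ and at a fresh initial state $i$, and one takes $Q_S=\{i\}$, $Q_{NS}=\{s\}$; then $L(\A,i)=\{a\}^*$, so ISO holds iff $t$ is reachable from $s$. You reduce from non-reachability instead, encoding edges as unobservable transitions and placing a single observable transition out of $t$, so that ISO holds iff $t$ is \emph{not} reachable. Both are valid logspace reductions. The paper's construction is a touch sharper in that it uses a single-event alphabet with no unobservable events at all, whereas yours introduces an auxiliary unobservable event; on the other hand, yours avoids self-loops and keeps the observable part of the automaton acyclic. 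Either route establishes \NL-hardness.
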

  \begin{proof}
    Deciding initial-state opacity is equivalent to checking the inclusion of two prefix-closed languages. Namely, a DES $G$ with $\Sigma_o=\{a\}$ is initial-state opaque with respect to secret states $Q_S$ and non-secret states $Q_{NS}$ if and only if $K_S\subseteq K_{NS}$ for $K_S=P(L(G,Q_S))$ and $K_{NS}=P(L(G,Q_{NS}))$. Since the languages $K_S$ and $K_{NS}$ are prefix-closed, they are either finite, consisting of at most $|Q|$ strings, or equal to $\{a\}^*$.

    To show that the problem belongs to \NL, we show how to verify $K_S \not\subseteq K_{NS}$ in nondeterministic logarithmic space. Then, since \NL is closed under complement~\cite{Immerman88,Szelepcsenyi87}, $K_S \subseteq K_{NS}$ belongs to \NL. Thus, to check that $K_S \not\subseteq K_{NS}$ in nondeterministic logarithmic space, we guess $k \in \{0,\ldots, |Q|\}$ in binary, store it in logarithmic space, and verify that $a^k\in K_S$ and $a^k\notin K_{NS}$. To verify $a^k \in K_S$, we guess a path in $G$ step by step, storing only the current state, and counting the number of steps by decreasing $k$ by one in each step; logarithmic space is sufficient for this. Since $a^k \notin K_{NS}$ belongs to the complement of \NL, which coincides with \NL, we can check $a^k \notin K_{NS}$ in nondeterministic logarithmic space as well.

    To show that deciding initial-state opacity for DESs with a single observable event is \NL-hard, we reduce the DAG reachability problem~\cite{Jones75}: given a DAG $G=(V,E)$ and nodes $s,t\in V$, the problem asks whether $t$ is reachable from $s$.
    From $G$, we construct a DES $\A=(V\cup\{i\},\{a\},\delta,\{s,i\})$, where $i$ is a new initial state and $a$ is an observable event, as follows. With each node of $G$, we associate a state in $\A$. Whenever there is an edge from $j$ to $k$ in $G$, we add an $a$-transition from $j$ to $k$ to $\A$. We add a self-loop labeled by $a$ to state $t$ and to state $i$. The set of secret initial states is $Q_S=\{i\}$ and the set of non-secret initial states $Q_{NS}=\{s\}$. Then, $\A$ is initial-state opaque if and only if there is a path from $s$ to $t$ in $G$.
    Indeed, $L(\A,i)=\{a\}^*$ is included in $L(\A,s)$ if and only if $L(\A,s)=\{a\}^*$, which is if and only if $t$ is reachable from $s$.
  \qed\end{proof}

\subsubsection{Algorithmic complexity of deciding LBO}
  The algorithmic complexity of deciding whether a given DES is language-based opaque with respect to given secret and non-secret languages has been investigated in the literature. Lin~\cite{Lin2011} suggested an algorithm with the complexity $O(2^{2n})$, where $n$ is the order of the state spaces of the automata representing the secret and non-secret languages. The same complexity has been achieved by Wu and Lafortune~\cite{WuLafortune2013} using the transformation to current-state opacity. We improve this complexity.

  \begin{theorem}\label{lboAlgo}
    The time complexity of deciding whether a DES $G$ is language-based opaque with respect to a projection $P$, a secret language $L_S \subseteq L(G)$, and a non-secret language $L_{NS} \subseteq L(G)$ is $O(m\ell 2^{n_2} + n_12^{n_2})$, where $n_1$ is the number of states of the automaton recognizing $L_S$, $n_2$ is the number of states recognizing $L_{NS}$, $m\le \ell n_1^2$ is the number of transitions of an NFA recognizing $P(L_S)$, and $\ell$ is the number of observable events.
  \end{theorem}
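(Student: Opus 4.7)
The plan is to reduce the problem to a regular language inclusion and solve it with an on-the-fly product construction. By definition, $G$ is language-based opaque with respect to $L_S$, $L_{NS}$, and $P$ if and only if $P(L_S) \subseteq P(L_{NS})$, so it suffices to decide this inclusion of regular languages over $\Sigma_o$. Let $A_S$ and $A_{NS}$ be the given automata recognizing $L_S$ and $L_{NS}$, with $n_1$ and $n_2$ states, respectively. First I would construct the projected automaton of $A_S$, an NFA over $\Sigma_o$ with $n_1$ states and $m$ transitions, and the projected automaton of $A_{NS}$, an NFA over $\Sigma_o$ with $n_2$ states.

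To decide the inclusion, I would determinize the projected automaton of $A_{NS}$ via the subset construction and take its synchronous product with the NFA projected from $A_S$, exploring reachable product states $(p,S) \in Q_1 \times 2^{Q_2}$ by breadth-first search. The inclusion $P(L_S) \subseteq P(L_{NS})$ fails, equivalently $G$ is not language-based opaque, exactly when some reachable product state $(p,S)$ is \emph{bad}, meaning $p$ is accepting in the projected automaton of $A_S$ while $S$ contains no accepting state of the projected automaton of $A_{NS}$. The algorithm rejects as soon as a bad reachable state is generated; otherwise it accepts. Correctness is the standard argument that the reachable product states realize precisely the pairs (word-prefix in $P(A_S)$, corresponding deterministic state of $P(A_{NS})$).

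For the complexity, the number of reachable product states is at most $n_1 \cdot 2^{n_2}$, which yields the $n_1 \cdot 2^{n_2}$ term for state enumeration and bookkeeping. For the transition work, I would build the successors lazily: for each reachable subset $S$, the cost of evaluating the deterministic successors $\delta_{NS}(S,a)$ for the observable events relevant at a given $p$ is amortized against the enumeration of the $m$ transitions of the projected automaton of $A_S$, with an overhead factor $\ell$ coming from the observable alphabet. Summing over the $2^{n_2}$ reachable subsets produces the $m\ell \cdot 2^{n_2}$ term, giving the claimed overall bound $O(m\ell \cdot 2^{n_2} + n_1 \cdot 2^{n_2})$.

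The main obstacle is the amortized accounting of the interleaved subset construction with the product exploration so that the total transition work lands at $m\ell \cdot 2^{n_2}$ rather than the looser $\ell n_2 \cdot 2^{n_2} + m \cdot 2^{n_2}$ one obtains by first determinizing $P(A_{NS})$ in full and then forming the product. Once this accounting is in place, the remainder of the proof is the routine correctness argument for product-based inclusion checking on NFAs.
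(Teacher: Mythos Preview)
Your approach is essentially the paper's: reduce LBO to the inclusion $P(L_S)\subseteq P(L_{NS})$, represent the right-hand side by the (complemented) observer of $A_{NS}$, take its product with the projected NFA $P(A_S)$, and search for a reachable witness of non-inclusion. The paper does not perform any on-the-fly amortized accounting; it simply precomputes the observer ($2^{n_2}$ states, $\ell 2^{n_2}$ transitions), forms the product with $P(A_S)$, and bounds the product's size by $O(m\ell 2^{n_2}+n_12^{n_2})$ transitions and states, so the ``obstacle'' you flag does not arise.
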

  \begin{proof}
    Let $G_S$ and $G_{NS}$ be automata recognizing $L_S$ and $L_{NS}$ with $n_1$ and $n_2$ states, respectively. Then $P(L_S)\subseteq P(L_{NS})$ if and only if $P(L_S) \cap \text{co-}P(L_{NS}) = \emptyset$, where $\text{co-}P(L_{NS})$ stands for $\Sigma^* - P(L_{NS})$.
    We represent $P(L_S)$ by the projected automaton $P(G_S)$ with $m$ transitions and at most $n_1$ states, and $\text{co-}P(L_{NS})$ by the complement of the observer of $G_{NS}$, denoted by co-$G_{NS}^{obs}$, which has at most $2^{n_2}$ states and $\ell 2^{n_2}$ transitions.
    The problem is now equivalent to checking whether the language of $P(G_S) \cap \text{co-}G_{NS}^{obs}$ is empty, which means to search the structure for a reachable marked state. Since $P(G_S)$ has at most $n_1$ states and $m\le \ell n_1^2$ transitions, the structure has $O(m\ell 2^{n_2} + n_12^{n_2})$ transitions and states, which completes the proof.
  \qed\end{proof}

\subsection{Transformations between CSO and INSO}\label{redINSO-CSO}
  In this section, we provide the transformations between current-state opacity and infinite-step opacity. To the best of our knowledge, no transformations between current-state opacity and infinite-step opacity have been discussed in the literature so far.

\subsubsection{Transforming CSO to INSO}\label{redCSOtoINSO}
  We first focus on the transformation from current-state opacity to infinite-step opacity. The problem of deciding current-state opacity consists of a DES $G_{CSO}=(Q, \Sigma, \delta, I)$, a projection $P\colon\Sigma^*\to \Sigma_o^*$, a set of secret states $Q_{S}\subseteq Q$, and a set of non-secret states $Q_{NS} \subseteq Q$.
  From $G_{CSO}$, we construct a DES $G_{INSO}$ over the alphabet $\Sigma\cup\{u\}$, where $u$ is a new unobservable event. Specifically, we construct $G_{INSO}=(Q\cup\{q^\star\}, \Sigma\cup\{u\}, \delta', I)$ from $G_{CSO}$ by adding a new state $q^\star$ that is neither secret nor non-secret, and by defining $\delta'$ as follows, see Fig.~\ref{fig:cso-inso} for an illustration:
  \begin{enumerate}
    \item $\delta' = \delta$, that is, $\delta'$ is initialized as $\delta$ and further extended as follows;
    \item for each state $q\in Q_{NS}$, we add a transition $(q,u,q^\star)$ to $\delta'$;
    \item for each $a\in \Sigma$, we add a self-loop $(q^\star,a,q^\star)$ to $\delta'$.
  \end{enumerate}
  We extend the projection $P$ to the projection $P'\colon (\Sigma\cup\{u\})^* \to \Sigma_o^*$. The sets $Q_S$ and $Q_{NS}$ remain unchanged.

  \begin{figure}
    \begin{minipage}{\textwidth}
      \centering
      \includegraphics[align=c,scale=.89]{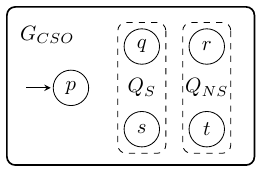}
      \quad $\Longrightarrow$ \quad
      \includegraphics[align=c,scale=.89]{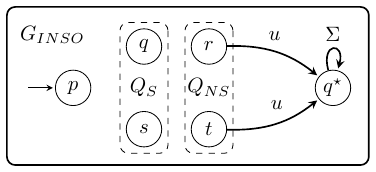}
    \end{minipage}
    \caption{Transforming CSO to INSO.}
    \label{fig:cso-inso}
  \end{figure}

  Notice that the transformation preserves the number of observable events and determinism, and that it requires only logarithmic space. It remains to show that $G_{CSO}$ is current-state opaque if and only if $G_{INSO}$ is infinite-step opaque.

  \begin{theorem}\label{thm:cso-inso}
    The DES $G_{CSO}$ is current-state opaque with respect to $Q_S$, $Q_{NS}$, and $P$ if and only if the DES $G_{INSO}$ is infinite-step opaque with respect to $Q_S$, $Q_{NS}$, and $P'$.
  \end{theorem}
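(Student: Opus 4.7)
The plan is to exploit the key structural property of $G_{INSO}$: the new state $q^*$ is absorbing for the purposes of reaching $Q$. Specifically, the only way to enter $q^*$ is via the unobservable event $u$ from some state in $Q_{NS}$, and once inside $q^*$ there are only $\Sigma$-self-loops (no $u$-self-loop was added, and no transitions leave $q^*$ to any state of $Q$). Consequently, every run in $G_{INSO}$ that ends in a state of $Q$ uses only events of $\Sigma$ and corresponds exactly to a run of $G_{CSO}$; every other run factors as $w_1 u w_2$ with $w_1 \in L(G_{CSO})$ hitting $Q_{NS}$ and $w_2 \in \Sigma^*$. This characterization drives both directions.

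For the forward direction (CSO $\Rightarrow$ INSO), I would take a pair $st \in L(G_{INSO})$ realizing a potential INSO violation, so that some $q_S \in \delta'(I,s) \cap Q_S$ admits $\delta'(q_S, t)\neq \emptyset$. Since $q^* \notin Q_S$, the run of $s$ never visits $q^*$, hence $s \in \Sigma^*$ and $s \in L(G_{CSO})$ with $\delta(I,s) \cap Q_S \neq \emptyset$. Apply current-state opacity of $G_{CSO}$ to obtain $s' \in \Sigma^*$ with $P(s') = P(s)$ and some $q_{NS} \in \delta(I,s') \cap Q_{NS}$. Then I would build $t'$ by letting $\tilde t$ be $t$ with every $u$ deleted and taking $t' = u \tilde t$: from $q_{NS}$ the event $u$ reaches $q^*$, after which every symbol of $\tilde t \in \Sigma^*$ is absorbed by the self-loops, so $t' \in L(G_{INSO}, q_{NS})$; moreover $P'(t') = P'(\tilde t) = P'(t)$, giving the required INSO witness $s't'$.

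For the reverse direction (INSO $\Rightarrow$ CSO), I would consider any $w \in L(G_{CSO})$ with $\delta(I,w) \cap Q_S \neq \emptyset$ and invoke infinite-step opacity of $G_{INSO}$ with $s = w$ and $t = \varepsilon$. Since $w \in \Sigma^*$ does not use $u$-transitions, $\delta'(I,w) = \delta(I,w)$, so the INSO hypothesis triggers and yields $s', t'$ with $P'(s') = P'(w)$, $P'(t') = \varepsilon$, and some $q \in \delta'(I,s') \cap Q_{NS}$ admitting $\delta'(q,t') \neq \emptyset$. Because $q \in Q$ (not $q^*$), the path carrying $s'$ to $q$ never enters the absorbing state, so $s' \in \Sigma^*$ and $q \in \delta(I,s') \cap Q_{NS}$. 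Noting that $P$ and $P'$ agree on $\Sigma^*$, this $s'$ is the required CSO witness for $w$.

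The only delicate point is the bookkeeping in the previous paragraph: one must verify that $q^*$ is genuinely an ``point of no return'' (no $u$-self-loop was added, no outgoing edge to $Q$), so that any run ending in $Q$ avoids $q^*$ altogether and therefore lives in $G_{CSO}$. Once this invariant is in hand, both implications reduce to a direct translation between CSO and INSO witnesses, with $t' = u\tilde t$ supplying the extension past the current state that distinguishes infinite-step from current-state opacity.
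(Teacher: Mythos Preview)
Your argument is correct and follows essentially the same line as the paper's proof: both directions hinge on $q^*$ being absorbing and neither secret nor non-secret, so that current-state opacity of $G_{CSO}$ and of $G_{INSO}$ coincide, and any non-secret state can be extended via $u$ into $q^*$ to match an arbitrary observable continuation. Your choice $t' = u\tilde t$ (with the $u$'s stripped from $t$) is in fact slightly more careful than the paper's $t' = ut$, since $q^*$ carries self-loops only for events in $\Sigma$, not for $u$.
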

  \begin{proof}
    Assume first that $G_{CSO}$ is not current-state opaque. Since the new state $q^\star$ is neither secret nor non-secret, we have that $G_{INSO}$ is not current-state opaque either. Consequently, $G_{INSO}$ is not infinite-step opaque.

    On the other hand, assume that $G_{CSO}$ is current-state opaque. Since the new state $q^\star$ is neither secret nor non-secret, we have that $G_{INSO}$ is current-state opaque as well. Let $st \in L(G_{INSO})$ be such that $\delta'(\delta'(I, s)\cap Q_S, t) \neq \emptyset$; in particular, $\delta'(I, s)\cap Q_S \neq \emptyset$. Then, since $G_{INSO}$ is current-state opaque, there exists $s'\in L(G_{INSO})$ such that $P'(s')=P'(s)$ and $\delta'(I,s')\cap Q_{NS} \neq\emptyset$. By construction, $s'$ can be extended by the string $ut$ using the transitions to state $q^\star$ followed by self-loops in state $q^\star$. Therefore, $\delta' (\delta'(I,s')\cap Q_{NS}, ut) \neq \emptyset$ and $P'(st) = P'(sut)$, which shows that $G_{INSO}$ is infinite-step opaque.
  \qed\end{proof}

  We now illustrate the construction in the following example.
  \begin{example}\label{ex:cso-inso}
    Let $G_2$ over $\Sigma=\{a,b,c\}$ depicted in Fig.~\ref{fig:cso-inso-ex} (left) be the instance of the CSO problem with the set of secret states $Q_S=\{2\}$ and the set of non-secret states $Q_{NS}=\{5\}$. Our transformation of CSO to INSO then results in the DES $G_2'$ depicted in Fig.~\ref{fig:cso-inso-ex} (right) with a new state $q^\star$ and a new unobservable event $u$.
    We distinguish two cases depending on whether event $c$ is observable or not.

    If event $c$ is unobservable, then $G_2$ is current-state opaque, because the only string leading to a secret state, state $2$, is the string $a$, for which the string $ac$ leading to the non-secret state, state $5$, satisfies that $P(a)=P(ac)$. Then, the reader can see that $G_2'$ is infinite-step opaque, because the only possible extensions of the string $a$ from the secret state $2$ are of the form $b^k$, for $k\in\mathbb{N}$, and for every such extension there is an extension $ub^k$ of the string $ac$ from the non-secret state $5$ such that $P(ab^k)=P(acub^k)$.

    If event $c$ is observable, then $G_2$ is not current-state opaque, because the only string leading to a non-secret state, string $ac$, has a different observation then the string $a$ leading to the secret state, that is, $P(ac)\neq P(a)$. Consequently, the reader can verify that $G_2'$ is not current-state opaque, and hence neither infinite-step opaque.
    \begin{figure}
      \begin{minipage}{\textwidth}
        \centering
        \includegraphics[align=c,scale=.89]{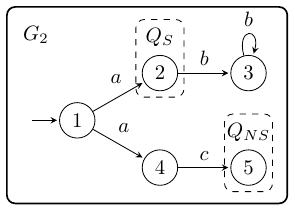}
        \quad $\Longrightarrow$ \quad
        \includegraphics[align=c,scale=.89]{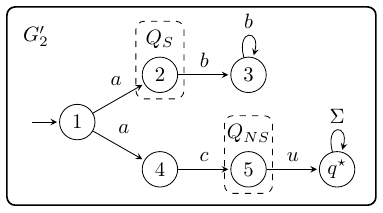}
      \end{minipage}
      \caption{
      An example of the transformation of the CSO problem (left) to the INSO problem (right).}
      \label{fig:cso-inso-ex}
    \end{figure}
  \end{example}

\subsubsection{Transforming INSO to CSO}\label{redINSOtoCSO}
  Transforming infinite-step opacity to current-state opacity is technically more involved. The problem of deciding infinite-step opacity consists of a DES $G_{INSO}=(Q, \Sigma, \delta, I)$, a projection $P\colon \Sigma^*\rightarrow \Sigma_o^*$, a set of secret states $Q_S\subseteq Q$, and a set of non-secret states $Q_{NS}\subseteq Q$.
  From $G_{INSO}$, we construct a DES $G_{CSO}$ in the following two steps:
  \begin{enumerate}
    \item We construct a DES $G_{CSO}$ such that $G_{CSO}$ is current-state opaque if and only if $G_{INSO}$ is infinite-step opaque. In this step of the construction, $G_{CSO}$ has one observable event more than $G_{INSO}$.

    \item To reduce the number of observable events by one, we apply Lemma~\ref{binEnc}. Consequently, the resulting DES has the same number of observable events as $G_{INSO}$, if $G_{INSO}$ has at least two observable events, is deterministic if and only if $G_{CSO}$ is, and is current-state opaque if and only if $G_{CSO}$ is.
  \end{enumerate}

  We now describe the construction of $G_{CSO}=(Q\cup Q^+\cup Q^-,\Sigma\cup\{@\},\delta',I)$, where $Q^+=\{q^+ \mid q \in Q\}$, $Q^-=\{q^- \mid q\in Q\}$, and $@$ is a new observable event. To this end, we first make two disjoint copies of $G_{INSO}$, denoted by $G_S$ and $G_{NS}$, where the set of states of $G_S$ is denoted by $Q_S'=Q^+$ and the set of states of $G_{NS}$ is denoted by $Q_{NS}'=Q^-$. The DES $G_{CSO}$ is taken as the disjoint union of the automata $G_{INSO}$, $G_S$, and $G_{NS}$, see Fig.~\ref{fig:inso-cso} for an illustration. Furthermore, for every state $q\in Q_S$, we add the transition $(q,@,q^+)$ and, for every state $q\in Q_{NS}$, we add the transition $(q,@,q^-)$. The set of secret states of $G_{CSO}$ is $Q_S'$ and the set of non-secret states of $G_{CSO}$ is $Q_{NS}'$. We extend projection $P$ to $P'\colon (\Sigma\cup\{@\})^* \to (\Sigma_o\cup\{@\})^*$.

  \begin{figure}
    \begin{minipage}{\textwidth}
      \centering
      \includegraphics[align=c,scale=.89]{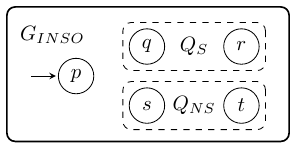}
      \quad $\Longrightarrow$ \quad
      \includegraphics[align=c,scale=.89]{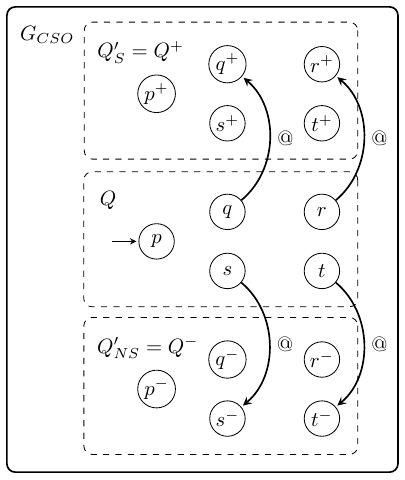}
    \end{minipage}
    \caption{Transforming INSO to CSO.}
    \label{fig:inso-cso}
  \end{figure}

  Notice that $G_{CSO}$ is deterministic if and only if $G_{INSO}$ is, and that logarithmic space is sufficient for the construction of $G_{CSO}$. As already pointed out, however, the construction does not preserve the number of observable events, which requires the second step of the construction using Lemma~\ref{binEnc} as described above.

  We now show that $G_{INSO}$ is infinite-step opaque if and only if $G_{CSO}$ is current-state opaque.

  \begin{theorem}\label{thm:inso-cso}
    The DES $G_{INSO}$ is infinite-step opaque with respect to $Q_S$, $Q_{NS}$, and $P$ if and only if the DES $G_{CSO}$ is current-state opaque with respect to $Q_S'$, $Q_{NS}'$, and $P'\colon (\Sigma\cup\{@\})^* \to (\Sigma_o\cup\{@\})^*$.
  \end{theorem}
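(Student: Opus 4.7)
The plan is to show, in both directions, a tight correspondence between paths in $G_{CSO}$ that end in a state of $Q_S' = Q^+$ (resp.\ $Q_{NS}' = Q^-$) and pairs of strings $(s,t)$ witnessing the infinite-step condition in $G_{INSO}$ via a secret (resp.\ non-secret) state. The key structural observation is that the three copies of $G_{INSO}$ composing $G_{CSO}$ are linked exclusively by the $@$-transitions, and that these transitions only go from the original part $Q$ into $Q^+$ or $Q^-$, with no way back. Consequently, any string $w \in L(G_{CSO})$ reaching a state of $Q^+ \cup Q^-$ necessarily decomposes uniquely as $w = s\,@\,t$, where $s \in \Sigma^*$ is read in the original $G_{INSO}$-part starting at $I$ and ending in $Q_S$ (if the end state is in $Q^+$) or in $Q_{NS}$ (if in $Q^-$), and $t \in \Sigma^*$ is read in the corresponding copy. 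Moreover $P'(w) = P(s)\,@\,P(t)$, so the single $@$ splits the observation canonically.

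For the forward direction, assume $G_{INSO}$ is infinite-step opaque. Take any $w \in L(G_{CSO})$ reaching a state of $Q_S' = Q^+$. By the structural observation, $w = s\,@\,t$ with some $q \in \delta(I,s) \cap Q_S$ and $\delta(q,t) \neq \emptyset$, i.e.\ $st \in L(G_{INSO})$ and $\delta(\delta(I,s)\cap Q_S, t) \neq \emptyset$. Infinite-step opacity of $G_{INSO}$ yields $s',t' \in \Sigma^*$ with $P(s)=P(s')$, $P(t)=P(t')$, and some $q' \in \delta(I,s')\cap Q_{NS}$ with $\delta(q',t')\neq\emptyset$. In $G_{CSO}$ we run $s'$ in the original part to $q' \in Q_{NS}$, take the $@$-transition to $(q')^- \in Q^-$, and simulate $t'$ inside $G_{NS}$ to reach a state of $Q_{NS}' = Q^-$. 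The resulting string $s'\,@\,t'$ has observation $P(s')\,@\,P(t') = P(s)\,@\,P(t) = P'(w)$, so current-state opacity of $G_{CSO}$ holds.

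For the backward direction, assume $G_{CSO}$ is current-state opaque and suppose $st \in L(G_{INSO})$ satisfies $\delta(\delta(I,s)\cap Q_S,t) \neq \emptyset$. Fix $q \in \delta(I,s) \cap Q_S$ with $\delta(q,t)\neq\emptyset$; then in $G_{CSO}$ the string $s\,@\,t$ is generated and ends in a state of $Q^+ = Q_S'$. Current-state opacity delivers a string $w'$ with $P'(w') = P(s)\,@\,P(t)$ ending in $Q_{NS}' = Q^-$. Again by the structural observation, $w'$ contains exactly one $@$, so $w' = s'\,@\,t'$ with $s'$ running in the original part from $I$ to some $q' \in Q_{NS}$ and $t'$ running in $G_{NS}$ from $(q')^-$ (hence, lifting to $G_{INSO}$, $t'$ is executable from $q'$). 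Equality of observations, together with the fact that $@ \notin \Sigma_o$ appears exactly once in both projections, forces $P(s) = P(s')$ and $P(t) = P(t')$; thus $\delta(\delta(I,s')\cap Q_{NS},t')\neq\emptyset$, proving $G_{INSO}$ infinite-step opaque.

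The main obstacle — really the only non-routine point — is justifying the canonical decomposition $w = s\,@\,t$ and the uniqueness of the single $@$ occurrence, which is what lets us transfer the splitting of observations between the two notions. Once that is in place, both implications follow by simulating paths between the original $G_{INSO}$ and its two copies in the obvious way.
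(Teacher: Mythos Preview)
Your proposal is correct and follows essentially the same approach as the paper: both rely on the structural decomposition $w = s\,@\,t$ induced by the unique observable $@$-transition linking the original copy to $Q^+$ or $Q^-$, and both transfer witnesses between the INSO and CSO conditions by simulating paths in the respective copies. The only cosmetic difference is that the paper proves the backward implication by contrapositive (assuming $G_{INSO}$ is not infinite-step opaque and exhibiting a CSO violation via $s@t$), whereas you prove it directly; the underlying idea is identical.
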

  \begin{proof}
    Assume that $G_{INSO}$ is infinite-step opaque. We show that $G_{CSO}$ is current-state opaque. To this end, consider a string $w$ such that $\delta'(I,w)\cap Q_S' \neq \emptyset$. We want to show that there exists $w'$ such that $P'(w)=P'(w')$ and $\delta'(I,w')\cap Q_{NS}' \neq \emptyset$.
    However, since $Q_S'=Q^+$, $w$ is of the form $w_1 @ w_2$. Then, by the construction, $\delta(I,w_1)$ contains a secret state of $G_{INSO}$, say $q \in \delta(I,w_1)\cap Q_S$, such that state $q^+$ is a copy of state $q$ reached under $@$ from state $q$ in $G_{CSO}$, and $w_2$ is read from state $q^+$ in the copy $G_S$ of $G_{INSO}$. That is, $w_2$ can be read from state $q$ in $G_{INSO}$, and hence $\delta(I,w_1w_2)\neq\emptyset$. Altogether, $\delta(\delta(I,w_1)\cap Q_S, w_2) \neq \emptyset$ and the fact that $G_{INSO}$ is infinite-step opaque imply that there exists a string $w_1'w_2' \in L(G_{INSO})$ such that $P(w_1)=P(w_1')$, $P(w_2)=P(w_2')$, and $\delta (\delta(I,w_1')\cap Q_{NS}, w_2') \neq \emptyset$.
    Let $w'=w_1'@w_2'$. Then $P'(w)=P'(w')$ and, by the construction, $\emptyset \neq \delta'(\delta'(I,w_1'@)\cap Q_{NS}', w_2') \subseteq Q_{NS}'$, which completes the proof.

    On the other hand, assume that $G_{INSO}$ is not infinite-step opaque, that is, there exists a string $st \in L(G_{INSO})$ such that $\delta(\delta(I,s)\cap Q_S,t) \neq \emptyset$ and for every $s't' \in L(G_{INSO})$ with $P(s)=P(s')$ and $P(t)=P(t')$, $\delta (\delta(I,s')\cap Q_{NS}, t') = \emptyset$. But then for $s@t \in L(G_{CSO})$, we have that $\emptyset \neq \delta'(\delta'(I,s@)\cap Q_S',t) = \delta'(I,s@t) \subseteq Q_S'$ and, for every $s'@t' \in L(G_{CSO})$ such that $P'(s@t)=P'(s'@t')$, we have that $\delta'(I,s'@t') \cap Q_{NS}' = \delta' (\delta'(I,s'@)\cap Q_{NS}', t') = \emptyset$, which shows that $G_{CSO}$ is not current-state opaque.
  \qed\end{proof}

  We now illustrate the construction.
  \begin{example}\label{ex:inso-cso}
    Let $G_3$ over $\Sigma=\{a,b,c\}$ depicted in Fig.~\ref{fig:inso-cso-ex} (left) be the instance of the INSO problem with the set of secret states $Q_S=\{2\}$ and the set of non-secret states $Q_{NS}=\{4\}$. Our transformation of INSO to CSO then results in the DES $G_3'$ depicted in Fig.~\ref{fig:inso-cso-ex} (right) with a new observable event $@$, the set of secret states $Q_S'$, and the set of non-secret states $Q_{NS}'$.
    We again consider two cases based on the observability status of event $c$.

    If event $c$ is unobservable, then $G_3$ is infinite-step opaque. Indeed, the only string leading to the single secret state, state $2$, is the string $a$. The same string leads to the single non-secret state, state $4$. Then, any possible extension of the string $a$ from the unique secret state $2$ is the string $b^k$, for $k\in\mathbb{N}$, which reaches state $3$. However, for any such extension, there is the extension $cb^k$ from the non-secret state $4$ with $P(ab^k)=P(acb^k)$.
    The reader can further see that $G_3'$ is current-state opaque, because it can enter a secret state only after generating a string of the form $a@b^k$, $k\in\mathbb{N}$, in which case $\delta'(1,P^{-1}(a@))=\{2^+,4^-,5^-\}$ and $\delta'(1,P^{-1}(a@b^k))=\{3^+,5^-\}$ for $k\ge 1$.

    If event $c$ is observable, then $G_3$ is not infinite-step opaque, because after generating string $ab$, the intruder can deduce that the system was in the secret state $2$. Similarly, after generating string $a@b$, system $G_3'$ ends up in the only state $3^+$, which is a secret state, and hence $G_3'$ is not current-state opaque.
    \begin{figure}
      \begin{minipage}{\textwidth}
        \centering
        \includegraphics[align=c,scale=.89]{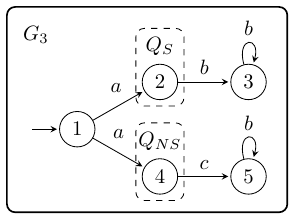}
        \quad $\Longrightarrow$ \quad
        \includegraphics[align=c,scale=.89]{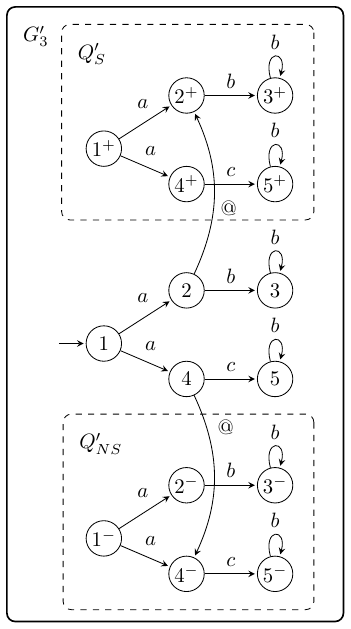}
      \end{minipage}
      \caption{An example of the transformation of the INSO problem (left) to the CSO problem (right).
      }
      \label{fig:inso-cso-ex}
    \end{figure}
  \end{example}

\subsubsection{The case of a single observable event}
  To preserve the number of observable events, our transformation of infinite-step opacity to current state opacity relies on Lemma~\ref{binEnc}. This lemma requires at least two observable events in $G_{INSO}$, and hence it is not applicable to systems with a single observable event. For these systems, we provide a different transformation that requires to add at most a quadratic number of new states.

  The problem of deciding infinite-step opacity for systems with a single observable event consists of a DES $G_{INSO}=(Q,\Sigma, \delta, I)$ with $\Sigma_o=\{a\}$, a set of secret states $Q_{S}\subseteq Q$, a set of non-secret states $Q_{NS}\subseteq Q$, and a projection $P\colon\Sigma^*\to \{a\}^*$.
  We denote the number of states of $G_{INSO}$ by $n$, and define a function $\varphi\colon  Q \rightarrow \{0,\ldots, n\}$ that assigns, to every state $q$, the maximal number $k\in \{0,\ldots, n\}$ of observable steps that are possible from state $q$; formally, $\varphi(q)=\max\left\{ k\in \{0,\ldots,n\} \mid \delta(q,P^{-1}(a^k))\neq \emptyset\right\}$.

  From $G_{INSO}$, we construct a DES $G_{CSO}=(Q', \Sigma, \delta', I)$ as illustrated in Fig.~\ref{fig:inso-cso-single}, where $\delta'$ is initialized as $\delta$ and modified as follows. For every state $p\in Q$ with $\varphi(p) > 0$, we add $n$ new states $p_1,\ldots,p_n$ to $Q'$ and $n$ new transitions $(p,a,p_1)$ and $(p_i,a,p_{i+1})$, for $i=1,\dots, n-1$, to $\delta'$. Finally, we replace every transition $(p,a,r)$ in $\delta'$ by the transition $(p_n,a,r)$. Notice that the transformation requires to add at most $n^2$ states, and hence it can be done in polynomial time.
  Let $Q'_S=Q_S$ and $Q'_{NS}=Q_{NS}$. For every state $p\in Q_S$ with $\varphi(p) = k > 0$, we add the corresponding states $p_1,\ldots,p_k$ to $Q'_S$. Analogously, for $p\in Q_{NS}$ with $\varphi(p)=k>0$, we add $p_1,\ldots,p_k$ to $Q'_{NS}$.

  Notice that the transformation can be done in polynomial time, preserves the number of observable events, and determinism. However, whether the transformation can be done in logarithmic space is open. Even if the DES had no unobservable event, to determine whether $\varphi(\cdot)=n$ is equivalent to the detection of a cycle. The detection of a cycle is \NL-hard: We can reduce the DAG reachability problem as follows. Given a DAG $G$ and two nodes $s$ and $t$, we construct a DES $\G$ by associating a state with every node of $G$ and an $a$-transition with every edge of $G$. Finally, we add an $a$-transition from $t$ to $s$. Then $t$ is reachable from $s$ in $G$ if and only if $\G$ contains a cycle. Since it is an open problem whether \Log $=$ \NL, it is an open problem whether $\varphi$ can be computed in deterministic logarithmic space.

  \begin{figure}
    \begin{minipage}{\textwidth}
      \centering
      \includegraphics[align=c,scale=.75]{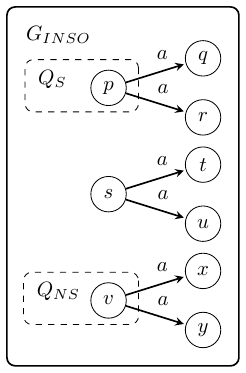}
      \quad $\Longrightarrow$ \quad
      \includegraphics[align=c,scale=.75]{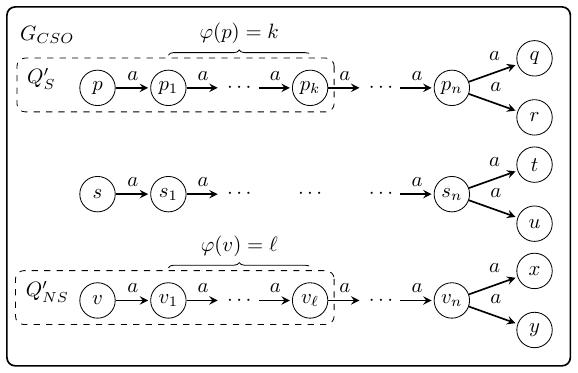}
    \end{minipage}
    \caption{Transforming INSO to CSO for systems with a single observable event.}
    \label{fig:inso-cso-single}
  \end{figure}

  We show that $G_{INSO}$ is infinite-step opaque if and only if $G_{CSO}$ is current-state opaque.

  \begin{theorem}
    The DES $G_{INSO}$ with a single observable event is infinite-step opaque with respect to $Q_S$, $Q_{NS}$, and $P$ if and only if the DES $G_{CSO}$ is current-state opaque with respect to $Q'_S$, $Q'_{NS}$, and $P$.
  \end{theorem}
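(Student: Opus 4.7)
The plan is to exploit a clean correspondence between states reachable in $G_{INSO}$ after $i$ observable events and states reachable in $G_{CSO}$ after $i(n+1)+j$ observable events, for $0 \le j \le n$. The chain construction makes every ``real'' observable $a$-transition of $G_{INSO}$ cost exactly $n+1$ observable $a$'s in $G_{CSO}$ (one to enter the chain, $n-1$ to traverse it, one to exit via $p_n$), while partial traversals land on chain states. Since every non-negative integer decomposes uniquely as $i(n+1)+j$ with $0 \le j \le n$, this parametrization covers all observation lengths.

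First I would prove, by induction on $i$, the following structural lemma. Let $R(i) := \delta(I, P^{-1}(a^i))$ denote the set of $G_{INSO}$-states reachable via strings projecting to $a^i$. Then for each $i \ge 0$ the states reachable in $G_{CSO}$ via strings projecting to $a^{i(n+1)}$ are exactly $R(i) \subseteq Q$, and for each $1 \le j \le n$ the states reachable in $G_{CSO}$ via strings projecting to $a^{i(n+1)+j}$ are exactly $\{p_j \mid p \in R(i),\ \varphi(p) > 0\}$. The inductive step crucially uses the fact that chain states carry no unobservable transitions, so the only way to move from one chain to another is to reach $p_n$ and exit via an $a$-transition to a real successor; hence no unintended real states appear at intermediate lengths and no chain ``shortcut'' exists.

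From the lemma together with the definitions of $Q'_S$ and $Q'_{NS}$, a secret (respectively non-secret) state is reached in $G_{CSO}$ by a string projecting to $a^{i(n+1)+j}$ precisely when some $p \in R(i) \cap Q_S$ (respectively $R(i) \cap Q_{NS}$) satisfies $\varphi(p) \ge j$. Hence $G_{CSO}$ is current-state opaque if and only if, for every $i \ge 0$ and every $0 \le j \le n$, the existence of a secret $p \in R(i)$ with $\varphi(p) \ge j$ forces the existence of a non-secret $p' \in R(i)$ with $\varphi(p') \ge j$.

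It remains to match this with infinite-step opacity of $G_{INSO}$. An INSO trigger $(s,t)$ with $|P(s)|=i$ and $|P(t)|=j$ asserts exactly that some $p \in R(i) \cap Q_S$ admits an extension of observable length $j$, and demands the same for a non-secret state. For $j \le n$ this is equivalent to $\varphi(p) \ge j$ by definition of $\varphi$, so the condition matches the one above. The subtle point, and the main obstacle, is the case $j > n$: a standard pigeonhole/pumping argument in the $n$-state automaton $G_{INSO}$ shows that a state admits extensions of arbitrarily large observable length if and only if $\varphi(\cdot) = n$, so the INSO condition at $(i,j)$ with $j > n$ reduces to the condition at $(i,n)$, which is already encoded by $G_{CSO}$ at $m = i(n+1)+n$. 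Combining the three steps yields both directions of the iff.
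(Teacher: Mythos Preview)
Your proposal is correct and rests on the same mechanism as the paper --- the $(n{+}1)$-fold dilation of observable steps, which the paper expresses via the morphism $f(a)=a^{n+1}$ and you express via the decomposition $m=i(n{+}1)+j$. The organisation differs: you first isolate a clean reachability lemma (states of $G_{CSO}$ at observable depth $i(n{+}1)+j$ are exactly $R(i)$ for $j=0$ and $\{p_j:p\in R(i),\varphi(p)>0\}$ for $1\le j\le n$) and then reduce both CSO of $G_{CSO}$ and INSO of $G_{INSO}$ to the single family of implications ``$\exists p\in R(i)\cap Q_S$ with $\varphi(p)\ge j$ $\Rightarrow$ $\exists p'\in R(i)\cap Q_{NS}$ with $\varphi(p')\ge j$'' for $0\le j\le n$, whereas the paper argues the two directions separately by pushing concrete witnesses through $f$ and $f^{-1}$. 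Your pigeonhole step for $j>n$ (showing that $\varphi(q)=n$ implies arbitrarily long observable extensions, hence the INSO condition at $(i,j)$ collapses to $(i,n)$) is exactly what the paper uses implicitly when it asserts $\varphi(q_s)>\varphi(q_{ns})$ without justification; you make this point explicit, which is a small improvement. Overall the two arguments are equivalent, with yours being more modular and the paper's more direct.
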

  \begin{proof}
    Assume that $G_{INSO}$ is not infinite-step opaque. Then, there exists $st\in L(G_{INSO})$ with $\delta(\delta(I,s)\cap Q_S, t) \neq \emptyset$ such that $\delta(\delta(I,P^{-1}P(s))\cap Q_{NS},P^{-1}P(t)) = \emptyset$. Let $f\colon \Sigma^* \to \Sigma^*$ be a morphism such that $f(a)=a^{n+1}$ and $f(b)=b$, for $a\neq b \in \Sigma$. Then, by construction, $\delta(I,s)= \delta'(I,f(s))$, and hence $\delta'(I,f(s)) \cap Q_S' \neq \emptyset$.
    If $\delta(I,P^{-1}P(s))\cap Q_{NS} = \emptyset$, then $\delta(I,f(P^{-1}P(s)))\cap Q_{NS}' = \emptyset$ because $\delta(I,s') = \delta'(I,f(s'))$ for any $s'\in P^{-1}P(s)$, and $G_{CSO}$ is not current-state opaque.
    Otherwise, we denote by $q_s\in \delta(I,s)\cap Q_S$ and $q_{ns} \in \delta(I,P^{-1}P(s)) \cap Q_{NS}$ the states with maximal $\varphi(q_s)$ and $\varphi(q_{ns})$. Since $G_{INSO}$ is not infinite-step opaque, $\varphi(q_s) > \varphi(q_{ns})$.
    Then, in $G_{CSO}$, $q_s$ has exactly one outgoing observable transition and is followed by $\varphi(q_s)=k$ secret states, while $q_{ns}$ is followed by $\varphi(q_{ns})<k$ non-secret states.
    Therefore, $\delta'(I,f(s)a^k) \cap Q_S'\neq\emptyset$ and $\delta'(I,f(s')a^k) \cap Q_{NS}'=\emptyset$ for any $s'\in P^{-1}P(s)$, and hence $G_{CSO}$ is not current-state opaque.

    On the other hand, assume that $G_{INSO}$ is infinite-step opaque, and that $\delta'(I,w)\cap Q_S'\neq \emptyset$. We show that $\delta'(I,P^{-1}P(w))\cap Q_{NS} \neq \emptyset$. Consider a state $q_s \in \delta'(I,w) \cap Q_S'$ and a path $\pi$ in $G_{CSO}$ leading to $q_s$ under $w$. Denote by $p$ the last state of $\pi$ that corresponds to a state of $G_{INSO}$; that is, $p$ is not a new state added by the construction of $G_{CSO}$. Since $q_s\in Q_S'$, we have, by construction, that $p\in Q_S$. Then the choice of $p$ partitions $w=uv$, where $u$, read along the path $\pi$, leads to state $p$, and $v=a^\ell$ is a suffix of length $\ell \le n$. Let $u'$ be a string such that $f(u')=u$. Then $p\in \delta(I,u')\cap Q_S$. Since $\varphi(p)\ge \ell$, there exists $t$ such that $P(t)=a^\ell$ and $\delta(\delta(I,u')\cap Q_S,t) \neq \emptyset$ in $G_{INSO}$. Then infinite-step opacity of $G_{INSO}$ implies that there exists $u''$ and $t'$ such that $P(u')=P(u'')$, $P(t)=P(t')$, and $\delta(\delta(I,u'')\cap Q_{NS},t') \neq \emptyset$. In particular, there is a state $q_{ns}\in \delta(I,u'')\cap Q_{NS}$ with $\varphi(q_{ns}) \ge \ell$, and $\delta'(I,f(u''))\cap Q_{NS}' \neq \emptyset$. Therefore, $\delta'(I,f(u'')a^\ell)\cap Q_{NS}' \neq \emptyset$ and $P(f(u'')a^\ell) = P(uv) = P(w)$, which completes the proof.
  \qed\end{proof}

  We now illustrate the construction.
  \begin{example}\label{ex:inso-cso-single}
    Let $G_4$ over $\Sigma=\{a,u\}$ depicted in Fig.~\ref{fig:inso-cso-ex-single} (left) be the instance of the INSO problem with a single observable event $\Sigma_o = \{a\}$, the set of secret states $Q_S=\{1\}$, and the set of non-secret states $Q_{NS}=\{3\}$. Then, $\varphi(1)=\varphi(3)=3$, and our transformation of INSO to CSO results in the DES $G_4'$ depicted in Fig.~\ref{fig:inso-cso-ex-single} (right) with the set of secret states $Q_S'$ and the set of non-secret states $Q_{NS}'$.
    We consider two cases based on the presence of the unobservable transition $(1,u,3)$ in $G_4$.

    We first assume that the transition $(1,u,3)$ exists in $G_4$. Then, $G_4$ is infinite-step opaque, because any string $a^k$ leading from the secret state $1$ is indistinguishable from the string $ua^k$ that leads the system to the non-secret state $3$.
    The reader can see that $G_4'$ is current-state opaque, because a secret state is reachable only under a string of the form $a^k$, for $k\in\{0,1,2,3\}$, and for any such string there is an indistinguishable string $ua^k$ reaching a non-secret state.

    If the transition $(1,u,3)$ does not exist in $G_4$, then $G_4$ is not infinite-step opaque, because it is neither current-state opaque and, obviously, neither $G_4'$ is current-state opaque.

    \begin{figure}
      \begin{minipage}{\textwidth}
        \centering
        \includegraphics[align=c,scale=0.9]{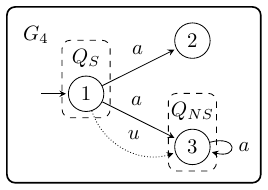}
        \quad $\Longrightarrow$ \quad
        \includegraphics[align=c,scale=0.9]{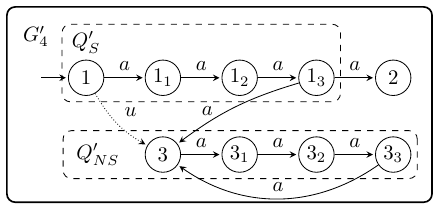}
      \end{minipage}
      \caption{An example of the transformation of the INSO problem with a single observable event (left) to the CSO problem (right).}
      \label{fig:inso-cso-ex-single}
    \end{figure}
  \end{example}

\subsubsection{Improving the algorithmic complexity of deciding infinite-step opacity}\label{insoAlgo}
  Let $G=(Q,\Sigma,\delta,I,F)$ be a DES. We design an algorithm deciding infinite-step opacity in time $O((n + m \ell) 2^n)$, where $\ell = |\Sigma_o|$ is the number of observable events, $n$ is the number of states of $G$, and $m$ is the number of transitions of $P(G)$, $m \le \ell n^2$.

  To decide whether $G$ is infinite-step opaque with respect to $Q_{S},Q_{NS}\subseteq Q$, and $P\colon \Sigma^* \to \Sigma_o^*$, we proceed as follows:
%

  \begin{enumerate}
   \item We compute the observer $\G^{obs}$ of $G$ in time $O(\ell 2^n)$~\cite{Lbook};

   \item We compute the projected automaton $P(G)$ of $G$ in time $O(m+n)$~\cite{HopcroftU79};

   \item We compute the product automaton $\C = P(G) \times \G^{obs}$ in time $O((m+n) \cdot \ell 2^n)$~\cite{Domaratzki2007};
    \begin{itemize}
      \item states of $\C$ are of the form $Q \times 2^Q$;
    \end{itemize}

    \item For every reachable state $X$ of $\G^{obs}$, we compute $X_S = X\cap Q_S$ and $X_{NS} = X \cap Q_{NS}$;
    \begin{enumerate}
      \item If $X_S\neq\emptyset$ and $X_{NS}=\emptyset$, then $G$ is not infinite-step opaque; this is, actually, the standard check whether $G$ is current-state opaque;

      \item Otherwise, for every state $x\in X_S$, we add a transition from $X$ under $@$ to state $(x,X_{NS})$ of $\C$, and we add the state $(x,X_{NS})$ to set $Y$;
    \end{enumerate}

    \item If $\C$ contains a state of the form $(q,\emptyset)$ reachable from $Y$, then $G$ is not infinite-step opaque; otherwise, $G$ is infinite-step opaque.
  \end{enumerate}

  Informally, we first make use of the standard check in the observer of $G$ whether $G$ is current-state opaque. If it is not, then it is neither infinite-step opaque. Otherwise, for every state $X$ of the observer of $G$ that contains both secret and non-secret states, we add a transition under the new event $@$ to a pair of a secret state $x\in X$ and the set of all non-secret states $X_{NS}$ of $X$. If a state of the form $(q,\emptyset)$ is reachable from $(x,X_{NS})$, then $G$ is not infinite-step opaque. Otherwise, $G$ is infinite-step opaque. We now formally prove correctness.

  \begin{lemma}
    The DES $G$ is infinite-step opaque if and only if $G$ is current-state opaque and no state of the form $(q,\emptyset)$ is reachable in $\C$ from the set $Y$.
  \end{lemma}
  \begin{proof}
    Assume that $G$ is not infinite-step opaque. Then, there exists $st\in L(G)$ such that $\delta(\delta(I,s)\cap Q_S, t) \neq \emptyset$ and $\delta(\delta(I,P^{-1}P(s))\cap Q_{NS},P^{-1}P(t)) = \emptyset$. There are two cases:
    (i) either $\delta(I,P^{-1}P(s))\cap Q_{NS} = \emptyset$, in which case $G$ is not current-state opaque, neither infinite-step opaque, and the algorithm detects this situation in the observer of $G$ on line 4(a),
    (ii) or $\delta(I,P^{-1}P(s))\cap Q_{NS} = Z \neq \emptyset$. In this case, $P(s)@$ leads from the observer of $G$ to the pairs $(\delta(I,P^{-1}P(s))\cap Q_S)\times \{Z\}$ of the NFA $\C$. Since $\delta(I,st)\neq\emptyset$, there exists $(z,Z)\in (\delta(I,P^{-1}P(s))\cap Q_S)\times \{Z\}$ such that $P(t)$ leads the projected automaton $P(G)$ from state $z$ to a state $q$. However, $\delta(Z,P^{-1}P(t))=\emptyset$ implies that $P(t)$ leads the observer of $G$ from state $Z$ to state $\emptyset$, and hence the pair $(q,\emptyset)$ is reachable in $\C$ from a state of $Y$.

    On the other hand, if $G$ is infinite-step opaque, then it is current-state opaque, and we show that no state of the form $(q,\emptyset)$ is reachable in $\C$ from a state of $Y$. For the sake of contradiction, assume that a state of the form $(q,\emptyset)$ is reachable in $\C$ from a state of $Y$. Then, there must be a string $s$ such that $P(s)$ reaches a state $X$ in the observer of $G$ such that $X_S = X \cap Q_S$ contains a state $z$, $X \cap Q_{NS} = Z \neq \emptyset$, there is a transition under $@$ from $X$ to the pair $(z,Z)$ of $\C$, and the NFA $\C$ reaches state $(q,\emptyset)$ from $(z,Z)$ under a string $w$. In particular, there must be a string $t\in P^{-1}(w)$ that moves $G$ from state $z$ to state $q$. But then $\delta(\delta(I,s)\cap Q_S,t)\neq\emptyset$, and $\delta(\delta(I,P^{-1}P(s))\cap Q_{NS},P^{-1}(w))=\emptyset$, which means that $G$ is not infinite-step opaque -- a contradiction.
  \qed\end{proof}

  Since our algorithm constructs and searches the NFA $\C$ that has $O(n2^n)$ states and $O(m \ell 2^n)$ transitions, the overall time complexity of our algorithm is $O((n + m \ell) 2^n)$.

  We now illustrate the procedure in the following example.
  \begin{example}\label{ex:alg1}
    We consider system $G_3$ of Example~\ref{ex:inso-cso} as depicted in Fig.~\ref{fig:inso-cso-ex} with all the events $a$, $b$, $c$ observable, the set of secret states $Q_S=\{2\}$, and the set of non-secret states $Q_{NS}=\{4\}$. Then $G_3$ is current-state opaque, but not infinite-step opaque. To show that $G_3$ is not infinite-step opaque, our algorithm works as follows. First, notice that $P(G_3)$ coincides with $G_3$, since there are no unobservable transitions in $G_3$. A relevant part of the observer of $G_3$ is depicted in Fig.~\ref{ex:alg1-obs} (left), and a relevant part of the automaton $C$, i.e., of the product of $P(G_3)$ with the observer of $G_3$, is depicted in Fig.~\ref{ex:alg1-obs} (right). The only reachable state of the observer that has a nonempty intersection with $Q_S=\{2\}$ is state $X=\{2,4\}$, resulting in $X_S=\{2\}$ and $X_{NS}=\{4\}$. The algorithm then creates an $@$-transition from state $X=\{2,4\}$ of the observer to state $(2,\{4\})$ of the product automaton $C$ (the dashed transition in Fig.~\ref{ex:alg1-obs}). Since state $(3,\emptyset)$ is reachable from state $(2,\{4\})$ in $C$, system $G_3$ is not infinite-step opaque; indeed, observing $ab$ in $G_3$, the intruder knows for sure that the system was in a secret state.
    \begin{figure}
      \begin{center}
        \includegraphics[scale=.85]{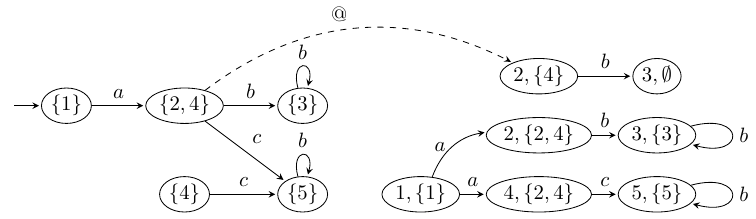}
        \caption{The relevant part of the observer of $G_3$ (left), the corresponding part of the automaton $C$ (right), and the $@$-transition (dashed) added by the algorithm.}
        \label{ex:alg1-obs}
      \end{center}
      \vspace{-15pt}
    \end{figure}
    
    {\makeatletter%
     \let\par\@@par\par%
     \parshape0\everypar{}%
    \begin{wrapfigure}{r}{0.38\textwidth}
      \centering
      \vspace{-10pt}
      \includegraphics[scale=.8]{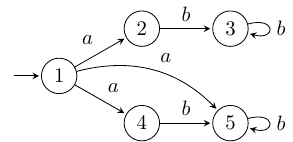}
      \caption{Projected automaton $P(\tilde{G_3})$.}
    \label{ex:alg1-obs2}
    \end{wrapfigure}
    On the other hand, we now assume that $c$ is unobservable. To avoid confusion, we denote $G_3$ with $a$ and $b$ observable, $c$ unobservable, the set of secret states $Q_S=\{2\}$, and the set of non-secret states $Q_{NS}=\{4\}$ as $\tilde{G_3}$. Then $\tilde{G_3}$ is infinite-step opaque, and our algorithm works as follows. First, we construct $P(\tilde{G_3})$ as shown in Fig.~\ref{ex:alg1-obs2}. Relevant parts of the observer of $\tilde{G_3}$ and of the product of $P(\tilde{G_3})$ with the observer, automaton $C$, is depicted in Fig.~\ref{ex:alg1-obs3}. 
    \begin{figure}[b]
      \centering
      \includegraphics[scale=.85]{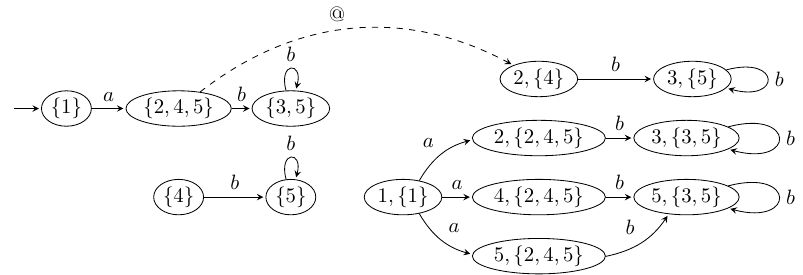}
      \caption{The relevant part of the observer of $\tilde{G_3}$ (left), the corresponding part of the automaton $C$ (right), and the $@$-transition (dashed) added by the algorithm.}
      \label{ex:alg1-obs3}
    \end{figure}
    The only reachable state of the observer of $\tilde{G_3}$ with a nonempty intersection with $Q_S=\{2\}$ is state $X=\{2,4,5\}$, resulting in $X_S=\{2\}$ and $X_{NS}=\{4\}$. The algorithm creates an $@$-transition from state $X=\{2,4,5\}$ of the observer to state $(2,\{4\})$ of the product automaton $C$ (the dashed transition in Fig.~\ref{ex:alg1-obs3}). Since no state of the form $(q,\emptyset)$ is reachable from state $(2,\{4\})$, $\tilde{G_3}$ is infinite-step opaque.\par}
  \end{example}

\subsection{Transformations between CSO and K-SO}\label{redKSO-CSO}
  In this section, we describe the transformations between current-state opacity and K-step opacity. To the best of our knowledge, no such transformations have been considered in the literature so far.

  \subsubsection{Transforming CSO to K-SO}\label{redCSOtoKSO}

  \begin{figure}
    \begin{minipage}{\textwidth}
      \centering
      \includegraphics[align=c,scale=.75]{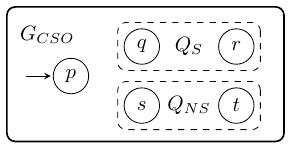}
      \quad $\Longrightarrow$ \quad
      \includegraphics[align=c,scale=.75]{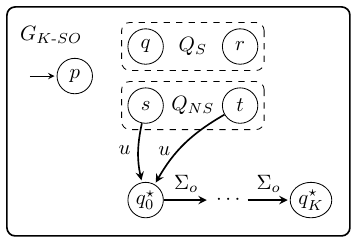}
    \end{minipage}
    \caption{Transforming CSO to K-SO.}
    \label{fig:cso-kso}
  \end{figure}

  The transformation from current state opacity to K-step opacity is analogous to the transformation from current state opacity to infinite-step opacity of Section~\ref{redCSOtoINSO}. Intuitively, the modification is that we need to make only K observable steps from any non-secret state instead of infinitely many such steps.

  The problem of deciding current-state opacity consists of a DES $G_{CSO}=(Q, \Sigma, \delta, I)$, a projection $P\colon\Sigma^*\to \Sigma_o^*$, a set of secret states $Q_{S}\subseteq Q$, and a set of non-secret states $Q_{NS} \subseteq Q$.
  For a given $K\in \mathbb{N}$, from $G_{CSO}$, we construct a DES $G$\textsubscript{$K$-$SO$}$=(Q\cup Q^\star, \Sigma\cup\{u\}, \delta', I)$, where $u$ is a new unobservable event, by adding $K+1$ new states $Q^\star=\{q_0^\star, \ldots, q_{K}^\star\}$ that are neither secret nor non-secret, and by defining $\delta'$ as follows, see Fig.~\ref{fig:cso-kso} for an illustration:
  \begin{enumerate}
  \item $\delta' = \delta$, that is, $\delta'$ is initialized as $\delta$ and further extended as follows;
  \item for every state $q\in Q_{NS}$, we add the transition $(q,u,q_0^\star)$ to $\delta'$;
  \item for $i=0,\ldots, K-1$ and every $a\in\Sigma_o$, we add the transition $(q_i^\star, a, q_{i+1}^\star)$ to $\delta'$.
  \end{enumerate}
  We extend the projection $P$ to the projection $P'\colon (\Sigma\cup\{u\})^* \to \Sigma_o^*$. The sets $Q_S$ and $Q_{NS}$ remain unchanged.

  \begin{theorem}\label{thm:cso-kso}
    The DES $G_{CSO}$ is current-state opaque with respect to $Q_S$, $Q_{NS}$, and $P$ if and only if the DES $G$\textsubscript{$K$-$SO$} is K-step opaque with respect to $Q_S$, $Q_{NS}$, $P'$, and $K$.
  \end{theorem}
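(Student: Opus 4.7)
The plan is to mirror the structure of the proof of Theorem~\ref{thm:cso-inso}, using that the new states $q_0^*,\ldots,q_K^*$ are neither secret nor non-secret, and that once a computation enters this chain it cannot leave it. I will prove both directions by contraposition and by a direct construction of witnesses, respectively.

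For the forward direction, I would argue by contraposition: assume $G_{CSO}$ is not current-state opaque. Since $q_i^*\notin Q_S\cup Q_{NS}$ for all $i$, and since every transition from $Q$ to the chain is under the unobservable event $u$ while the chain has no transitions back to $Q$, any string $w\in (\Sigma\cup\{u\})^*$ with $\delta'(I,w)\cap Q_S\neq\emptyset$ must lie in $\Sigma^*$ and satisfies $\delta(I,w)=\delta'(I,w)$. A symmetric observation holds for $Q_{NS}$. Hence current-state opacity of $G_{CSO}$ coincides with current-state opacity of $G_{\text{$K$-$SO$}}$. Since K-step opacity with witness $t=\varepsilon$ (and $|P'(t)|=0\le K$) implies current-state opacity, failure of current-state opacity of $G_{CSO}$ yields failure of K-step opacity of $G_{\text{$K$-$SO$}}$.

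For the backward direction, assume $G_{CSO}$ is current-state opaque; equivalently, by the observation above, $G_{\text{$K$-$SO$}}$ is current-state opaque. Consider any $st\in L(G_{\text{$K$-$SO$}})$ with $|P'(t)|\le K$ and $\delta'(\delta'(I,s)\cap Q_S,t)\neq\emptyset$. Since $\delta'(I,s)\cap Q_S\neq\emptyset$, by the same reasoning $s\in\Sigma^*$ and $\delta(I,s)\cap Q_S\neq\emptyset$. Current-state opacity of $G_{CSO}$ yields $s'\in\Sigma^*$ with $P(s)=P(s')$ and $\delta(I,s')\cap Q_{NS}\neq\emptyset$. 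I then define the witness $t'=u\cdot P'(t)$. Because $s'\in\Sigma^*$, we have $P'(s')=P(s')=P(s)=P'(s)$; and $P'(t')=P'(u)P'(P'(t))=P'(t)$. From any state in $\delta'(I,s')\cap Q_{NS}$ we reach $q_0^*$ via $u$, and then read the at most $K$ observable symbols of $P'(t)$ along the chain $q_0^*\to q_1^*\to\cdots\to q_K^*$, ending in $q_{|P'(t)|}^*$. Therefore $\delta'(\delta'(I,s')\cap Q_{NS},t')\neq\emptyset$, certifying K-step opacity of $G_{\text{$K$-$SO$}}$.

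The only subtle point, and thus the place I would be most careful about, is the claim that reaching a secret (or non-secret) state forces the string to stay in $\Sigma^*$. This relies on the two structural facts about $Q^*$: the only edges entering $Q^*$ are the fresh $u$-transitions from $Q_{NS}$, and $Q^*$ has no edges back to $Q$. With these in hand, the length bound $|P'(t)|\le K$ exactly matches the length $K$ of the chain of observable transitions added in $Q^*$, so the witness $t'=u\cdot P'(t)$ is always executable. No further case analysis is needed.
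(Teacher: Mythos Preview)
Your proposal is correct and follows essentially the same approach as the paper's proof: both directions hinge on the observation that the added states $q_0^*,\ldots,q_K^*$ are neither secret nor non-secret (so current-state opacity is unchanged by the construction), and the backward direction produces the witness $t'=u\cdot P'(t)$ run through the chain of length $K$. Your write-up is in fact slightly more careful than the paper's, making explicit why $s\in\Sigma^*$ and why $\delta'(I,s)=\delta(I,s)$, and invoking CSO of $G_{CSO}$ rather than of $G_{\text{$K$-$SO$}}$ so that $s'\in\Sigma^*$ comes for free.
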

  \begin{proof}
    Assume first that $G_{CSO}$ is not current-state opaque. Since the new states $q_0^\star,\ldots, q_{K}^\star$ are neither secret nor non-secret, $G$\textsubscript{$K$-$SO$} is not current-state opaque either, and hence $G$\textsubscript{$K$-$SO$} is not K-step opaque.

    On the other hand, assume that $G_{CSO}$ is current-state opaque. Since the new states $q_0^\star,\ldots, q_{K}^\star$ are neither secret nor non-secret, $G$\textsubscript{$K$-$SO$} is current-state opaque as well. Let $st \in L(G$\textsubscript{$K$-$SO$}$)$ be such that $|P(t)| \leq K$ and $\delta'(\delta'(I, s)\cap Q_S, t) \neq \emptyset$. Then, since $G$\textsubscript{$K$-$SO$} is current-state opaque, there is $s'\in P^{-1}P(s)$ such that $\delta'(I,s')\cap Q_{NS} \neq\emptyset$. By construction, we can extend $s'$ by the string $uP(t)$ using the transitions through the new states $q_0^\star,\ldots,q_{K}^\star$, that is, $\delta' (\delta'(I,s')\cap Q_{NS}, uP(t)) \neq \emptyset$, and hence $G$\textsubscript{$K$-$SO$} is K-step opaque. \qed
  \end{proof}

  We now illustrate the construction.
  \begin{example}\label{ex:cso-kso}
    Let $G_2$ over $\Sigma=\{a,b,c\}$ depicted in Fig.~\ref{fig:cso-kso-ex} (left) be the instance of the CSO problem from Example~\ref{ex:cso-inso} with the set of secret states $Q_S=\{2\}$ and the set of non-secret states $Q_{NS}=\{5\}$. Our transformation of CSO to K-SO then results in the DES $G_2''$ depicted in Fig.~\ref{fig:cso-kso-ex} (right) with $K=2$, a new unobservable event $u$, and three new states $q_0^\star$, $q_1^\star$, and $q_2^\star$.
    We again distinguish two cases depending on whether event $c$ is observable or not.

    If $c$ is unobservable, $G_2$ is current-state opaque as shown in Example~\ref{ex:cso-inso}. The reader can see that $G_2''$ is then 2-step opaque, because the only possible extensions of the string $a$ from the secret state $2$ are of the form $b^k$, for $k\in\mathbb{N}$, and for those extensions where $k\leq 2$, there is an extension $ub^k$ of the string $ac$ from the non-secret state $5$ such that $P(ab^k)=P(acub^k)$.

    If $c$ is observable, then $G_2$ is not current-state opaque as shown in Example~\ref{ex:cso-inso}. Consequently, the reader can verify that $G_2''$ is not current-state opaque, and hence neither 2-step opaque.
    \begin{figure}
      \begin{minipage}{\textwidth}
        \centering
        \includegraphics[align=c,scale=0.9]{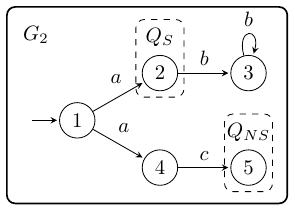}
        \quad $\Longrightarrow$ \quad
        \includegraphics[align=c,scale=0.9]{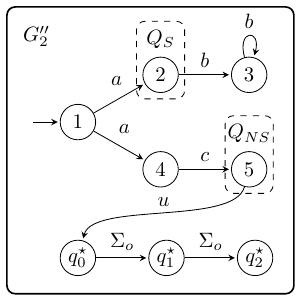}
      \end{minipage}
      \caption{An example of the transformation of the CSO problem (left) to the K-SO problem (right).}
      \label{fig:cso-kso-ex}
    \end{figure}
  \end{example}

\subsubsection{Transforming K-SO to CSO}\label{redKSOtoCSO}
  Transforming K-step opacity to current-state opacity is again similar to the transformation of infinite-step opacity to current-state opacity. Again, we only need to check K subsequent steps instead of all the subsequent steps.
  The problem of deciding K-step opacity consists of a DES $G_{\text{K-SO}}=(Q, \Sigma, \delta, I)$, a projection $P\colon \Sigma^*\rightarrow \Sigma_o^*$, a set of secret states $Q_S\subseteq Q$, and a set of non-secret states $Q_{NS}\subseteq Q$.
  From $G_{\text{K-SO}}$, we construct a DES $G_{CSO}$ in the following two steps:
  \begin{enumerate}
    \item We construct a DES $G_{CSO}$ such that $G_{CSO}$ is current-state opaque if and only if $G_{\text{K-SO}}$ is K-step opaque. In this step of the construction, $G_{CSO}$ has one observable event more than $G_{\text{K-SO}}$.

    \item To reduce the number of observable events by one, we apply Lemma~\ref{binEnc}. Consequently, the resulting DES has the same number of observable events as $G_{\text{K-SO}}$, if $G_{\text{K-SO}}$ has at least two observable events, is deterministic if and only if $G_{CSO}$ is, and is current-state opaque if and only if $G_{CSO}$ is.
  \end{enumerate}

  \begin{figure}
    \begin{minipage}{\textwidth}
      \centering
      \includegraphics[align=c,scale=.75]{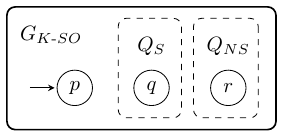}
      \quad $\Longrightarrow$ \quad
      \includegraphics[align=c,scale=.75]{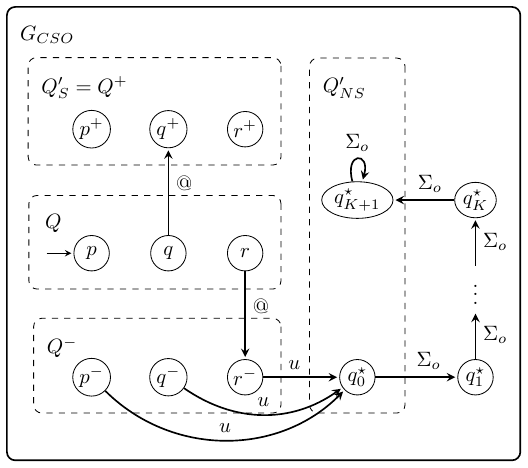}
    \end{minipage}
    \caption{Transforming K-SO to CSO.}
    \label{fig:k-so-cso}
  \end{figure}

  We now describe the construction of $G_{CSO}=(Q\cup Q^+\cup Q^-\cup Q^\star,\Sigma\cup\{u,@\},\delta',I)$, where
  $Q^+=\{q^+ \mid q \in Q\}$,
  $Q^-=\{q^- \mid q\in Q\}$,
  $Q^\star=\{q^\star_0,\ldots, q^\star_{K+1}\}$,
  $@$ is a new observable event, and $u$ is a new unobservable event.
  To this end, we first make two disjoint copies of $G_{\text{K-SO}}$, denoted by $G^+$ and $G^-$, where the set of states of $G^+$ is denoted by $Q^+$ and the set of states of $G^-$ is denoted by $Q^-$.
  The DES $G_{CSO}$ is now taken as the disjoint union of the automata $G_{\text{K-SO}}$, $G^+$, and $G^-$, see Fig.~\ref{fig:inso-cso} for an illustration.
  We now add K+2 new states $q_0^\star,\ldots, q_{K+1}^\star$ to $G_{CSO}$ and the following transitions.
  For every state $q\in Q_S$, we add the transition $(q,@,q^+)$, for every state $q\in Q_{NS}$, we add the transition $(q,@,q^-)$, for every $q^-\in Q^-$, we add the transition $(q^-, u,q_0^\star)$, for every $a\in\Sigma_o$ and $i=0,\dots, K$, we add the transition $(q_i^\star,a,q_{i+1}^\star)$, and, finally, we add the self-loop $(q_{K+1}^\star,a,q_{K+1}^\star)$ for every $a\in\Sigma_o$.
  The set of secret states of $G_{CSO}$ is the $Q_S'=Q^+$ and the set of non-secret states of $G_{CSO}$ is the set $Q_{NS}'=\{q_0^\star,q_{K+1}^\star\}$. We extend projection $P$ to $P'\colon (\Sigma\cup\{@, u\})^* \to (\Sigma_o\cup\{@\})^*$.

  Notice that $G_{CSO}$ is deterministic if and only if $G_{\text{K-SO}}$ is, and that logarithmic space is sufficient for the construction of $G_{CSO}$. However, as already pointed out, the construction does not preserve the number of observable events, which requires the second step of the construction using Lemma~\ref{binEnc}.

  We now show that $G_{\text{K-SO}}$ is K-step opaque if and only if $G_{CSO}$ is current-state opaque.

  \begin{theorem}\label{thm:kso-cso}
    The DES $G$\textsubscript{$K$-$SO$} is K-step opaque with respect to $Q_S$, $Q_{NS}$, and $P$ if and only if the DES $G_{CSO}$ is current-state opaque with respect to $Q_S'$, $Q_{NS}'$, and $P'\colon (\Sigma\cup\{@,u\})^* \to (\Sigma_o\cup\{@\})^*$.
  \end{theorem}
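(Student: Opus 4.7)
The plan is to follow the pattern of the proof of Theorem~\ref{thm:inso-cso}, but to carefully track how the chain $q_0^*, \ldots, q_{K+1}^*$ encodes the length bound from the K-step opacity definition. First I would observe that any string $w$ with $\delta'(I,w) \cap Q_S' \neq \emptyset$ must decompose as $w = s@t$ with $s,t \in \Sigma^*$ and $\delta(\delta(I,s) \cap Q_S, t) \neq \emptyset$ in $G$\textsubscript{$K$-$SO$}, since $@$ is the only way to enter $Q^+$ and from $Q^+$ one can only read symbols of $\Sigma$. Dually, a string $w'$ reaching $Q_{NS}'$ either ends at $q_0^*$, in which case $w' = s'@t'u$ with $\delta(\delta(I,s') \cap Q_{NS}, t') \neq \emptyset$ and $P'(w') = P(s')@P(t')$, or ends at $q_{K+1}^*$, in which case $w' = s'@t'u\,a_1\cdots a_m$ with $m \ge K+1$, all $a_i \in \Sigma_o$, and $P'(w') = P(s')@P(t')a_1\cdots a_m$.

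For the forward direction, assume $G$\textsubscript{$K$-$SO$} is K-step opaque and consider $w = s@t$ reaching $Q_S'$. If $|P(t)| \le K$, I would apply K-step opacity directly to the pair $(s,t)$ to obtain $s',t'$ with $P(s) = P(s')$, $P(t) = P(t')$, and $\delta(\delta(I,s') \cap Q_{NS}, t') \neq \emptyset$; then $w' = s'@t'u$ ends at $q_0^*$ and satisfies $P'(w') = P(s)@P(t) = P'(w)$. If $|P(t)| \ge K+1$, I would instead apply K-step opacity to the pair $(s,\varepsilon)$, which is legitimate because $|P(\varepsilon)| = 0 \le K$ and $\delta(I,s) \cap Q_S \neq \emptyset$; this yields $s',t'$ with $P(s) = P(s')$, $P(t') = \varepsilon$, and $\delta(\delta(I,s') \cap Q_{NS}, t') \neq \emptyset$. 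Taking $w' = s'@t'u\,P(t)$, the chain $q_0^* \to q_1^* \to \cdots$ followed by the self-loop at $q_{K+1}^*$ carries $w'$ to $q_{K+1}^*$ (since $|P(t)| \ge K+1$), and $P'(w') = P(s)@P(t) = P'(w)$.

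For the backward direction, assume $G_{CSO}$ is current-state opaque and take $st \in L(G$\textsubscript{$K$-$SO$}$)$ with $|P(t)| \le K$ and $\delta(\delta(I,s) \cap Q_S, t) \neq \emptyset$. Then $w = s@t$ reaches $Q_S'$ in $G_{CSO}$, so current-state opacity yields $w'$ with $P'(w') = P(s)@P(t)$ ending at $q_0^*$ or $q_{K+1}^*$. The critical observation, which I expect to be the main obstacle to pin down, is that reaching $q_{K+1}^*$ forces at least $K+1$ observable events after the $@$ in $P'(w')$; since $P(s)@P(t)$ contains only $|P(t)| \le K$ observable events after $@$, the $q_{K+1}^*$ case is ruled out. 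Hence $w'$ must end at $q_0^*$, giving $w' = s'@t'u$ with $P(s) = P(s')$, $P(t) = P(t')$, and $\delta(\delta(I,s') \cap Q_{NS}, t') \neq \emptyset$, which is exactly the K-step opacity conclusion. The heart of the argument is thus the asymmetric use of $q_0^*$ and $q_{K+1}^*$: $q_{K+1}^*$ absorbs long observable suffixes in the forward direction, while the bound $|P(t)| \le K$ excludes it in the backward direction.
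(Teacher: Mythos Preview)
Your proposal is correct and follows essentially the same approach as the paper: both split the forward direction into the cases $|P(t)|\le K$ (target $q_0^*$ via $s'@t'u$) and $|P(t)|>K$ (target $q_{K+1}^*$ via $s'@t'u\,P(t)$), and both use the length bound $|P(t)|\le K$ to exclude $q_{K+1}^*$ in the backward direction. Your direct argument for the backward implication is arguably cleaner than the paper's contrapositive, and your explicit structural decomposition of strings reaching $Q_{NS}'$ is more careful, but the underlying ideas are identical.
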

  \begin{proof}
    Assume that $G_{\text{K-SO}}$ is K-step opaque. We show that $G_{CSO}$ is current-state opaque. To this end, consider a string $w$ such that $\delta'(I,w)\cap Q_S' \neq \emptyset$. We want to show that there exists $w'\in P'^{-1}P'(w)$ such that $\delta'(I,w')\cap Q_{NS}' \neq \emptyset$.
    However, since $Q_S'=Q^+$, $w$ is of the form $w_1 @ w_2$ and, by the construction, $\delta(I,w_1)$ contains a secret state of $G_{\text{K-SO}}$.
    Since $G$ is K-step opaque, there exists a string $w_1' \in P^{-1}P(w_1)$ such that $\delta(I,w_1')\cap Q_{NS} \neq \emptyset$.
    Then, because $w_2$ can be read in the copy of $G_{\text{K-SO}}$ from a state $q^+$ for a state $q \in \delta(I,w_1)\cap Q_S$, we further have that $\delta(\delta(I,w_1)\cap Q_S, w_2) \neq \emptyset$.
    If $|P(w_2)|\le K$, then K-step opacity of $G_{\text{K-SO}}$ implies that there exists a string $w_1''w_2'' \in L(G_{\text{K-SO}})$ such that $P(w_1'')=P(w_1)$, $P(w_2'')=P(w_2)$, and $\delta (\delta(I,w_1'')\cap Q_{NS}, w_2'') \neq \emptyset$. By construction, $q_0^\star \in \delta' (\delta'(I,w_1''@)\cap Q_{NS}, w_2''u)$, and hence $G_{CSO}$ is current-state opaque.
    If $|P(w_2)|> K$, then $q_{K+1}^\star \in \delta' (\delta'(I,w_1''@)\cap Q_{NS}, uP(w_2''))$, and hence $G_{CSO}$ is current-state opaque.

    On the other hand, assume that $G_{\text{K-SO}}$ is not K-step opaque, that is, there exists a string $st \in L(G_{\text{K-SO}})$ such that $|P(t)|\le K$, $\delta(\delta(I,s)\cap Q_S,t) \neq \emptyset$ and, for every $s'\in P^{-1}P(s)$ and $t'\in P^{-1}P(t)$, $\delta (\delta(I,s')\cap Q_{NS}, t') = \emptyset$. But then, for $s@t \in L(G_{CSO})$, we have that $\delta'(\delta'(I,s@)\cap Q_S',t) \cap Q_S' \neq \emptyset$ and, for every $s'@t' \in L(G_{CSO})$ such that $P'(s@t)=P'(s'@t')$, we have two cases:
    (i) If $\delta(I,s') \cap Q_{NS} = \emptyset$, then $\delta'(I,s'@t') \cap Q_{NS}' = \delta' (\delta'(I,s'@)\cap Q^-, t') = \delta' (\emptyset, t') = \emptyset$, which shows that $G_{CSO}$ is not current-state opaque.
    (ii) If $\delta(I,s') \cap Q_{NS} \neq \emptyset$, then $\delta'(I,s'@t') \cap Q_{NS}' = \delta' (\delta'(I,s'@)\cap Q^-, t') = \emptyset$, because inserting $u$ to any strict prefix of $t'$ may reach $q_0^\star$ but has to leave it when the rest of $t'$ is read, and the rest (neither $P(t')$) is not long enough to reach state $q_{K+1}^\star$. Therefore, $G_{CSO}$ is not current-state opaque.
  \qed\end{proof}

  We now illustrate the construction.
  \begin{example}\label{ex:kso-cso}
    \begin{figure}
      \begin{minipage}{\textwidth}
        \centering
        \includegraphics[align=c,scale=0.9]{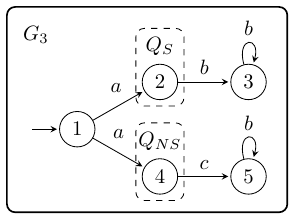}
        \quad $\Longrightarrow$ \quad
        \includegraphics[align=c,scale=0.9]{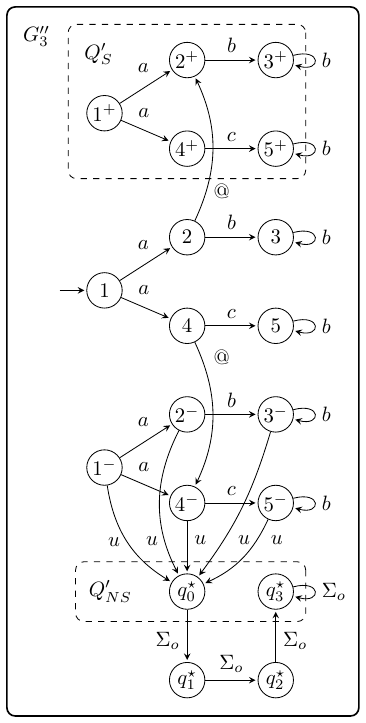}
      \end{minipage}
      \caption{An example of the transformation of the K-SO problem (left) to the CSO problem (right).}
      \label{fig:kso-cso-ex}
    \end{figure}
    Let $G_3$ over $\Sigma=\{a,b,c\}$ depicted in Fig.~\ref{fig:kso-cso-ex} (left) be the instance of the K-SO problem from Example~\ref{ex:inso-cso} with $K=2$, the set of secret states $Q_S=\{2\}$, and the set of non-secret states $Q_{NS}=\{4\}$. Our transformation of K-SO to CSO then results in the DES $G_3''$ depicted in Fig.~\ref{fig:kso-cso-ex} (right) with a new observable event $@$, a new unobservable event $u$, the set of secret states $Q_S'$, and the set of non-secret states $Q_{NS}'$.
     We consider two cases based on the observability status of event $c$.

    If $c$ is unobservable, then $G_3$ is 2-step opaque, because it is infinite-step opaque as shown in Example~\ref{ex:inso-cso}.
    The reader can further see that $G_3''$ is current-state opaque, because it can enter a secret state only after generating a string of the form $a@b^k$, for $k\in\mathbb{N}$, in which case we have that $\delta'(1,P^{-1}(a@))=\{2^+,4^-,5^-,q_0^\star\}$ and $\delta'(1,P^{-1}(a@b^k))=\{3^+,5^-,q_0^\star,\ldots,q_i^\star\}$ for $k\ge 1$, where $i=\min\{k, 3\}$.

    If $c$ is observable, then $G_3$ is not 2-step opaque, because after generating string $ab$, the intruder can deduce that the system was in the secret state $2$. Similarly, after generating string $a@b$, system $G_3''$ ends up in the only state $3^+$, which is a secret state, and hence $G_3''$ is not current-state opaque.
  \end{example}

\subsubsection{The case of a single observable event}
  To preserve the number of observable events, our transformation of K-step opacity to current state opacity relies on Lemma~\ref{binEnc}. This lemma requires at least two observable events in $G_{\text{K-SO}}$, and hence it is not applicable to systems with a single observable event. For these systems, we provide a different transformation that requires to add at most a quadratic number of new states.

  The problem of deciding K-step opacity for systems with a single observable event consists of a DES $G_{\text{K-SO}}=(Q,\Sigma, \delta, I)$ with $\Sigma_o=\{a\}$, a set of secret states $Q_{S}\subseteq Q$, a set of non-secret states $Q_{NS}\subseteq Q$, and a projection $P\colon\Sigma^*\to \{a\}^*$.
  We denote the number of states of $G_{\text{K-SO}}$ by $n$, and define a function $\varphi\colon  Q \rightarrow \{0,\ldots,K\}$ that assigns, to every state $q$, the maximal number $k\in \{0,\ldots,K\}$ of observable steps that are possible from state $q$; formally, $\varphi(q)=\max\left\{ k\in \{0,\ldots,K\} \mid \delta(q,P^{-1}(a^k))\neq \emptyset\right\}$. Notice that if $K > n-1$, then a system with a single observable event is K-step opaque if and only if it is infinite-step opaque. Therefore, we may consider only $K\le n-1$.

  From $G_{\text{K-SO}}$, we construct a DES $G_{CSO}=(Q', \Sigma, \delta', I)$ as illustrated in Fig.~\ref{fig:kso-cso-single}, where $\delta'$ is initialized as $\delta$ and modified as follows. For every state $p\in Q$ with $\varphi(p) > 0$, we add $K$ new states $p_1,\ldots,p_K$ to $Q'$ and $K$ new transitions $(p,a,p_1)$ and $(p_i,a,p_{i+1})$, for $i=1,\dots, K-1$, to $\delta'$. Finally, we replace every transition $(p,a,r)$ in $\delta'$ by the transition $(p_K,a,r)$. Notice that the transformation requires to add at most $n^2$ states, and hence it can be done in polynomial time.
  Let $Q'_S=Q_S$ and $Q'_{NS}=Q_{NS}$. For every state $p\in Q_S$ with $\varphi(p) = k > 0$, we add the corresponding states $p_1,\ldots,p_k$ to $Q'_S$ and, for every $p\in Q_{NS}$ with $\varphi(p)=k>0$, we add $p_1,\ldots,p_k$ to $Q'_{NS}$.

  Notice that the transformation can be done in polynomial time, preserves the number of observable events, and determinism. However, whether the transformation can be done in logarithmic space is open.

  \begin{figure}
    \begin{minipage}{\textwidth}
      \centering
      \includegraphics[align=c,scale=.75]{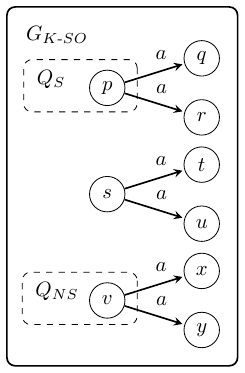}
      \quad $\Longrightarrow$ \quad
      \includegraphics[align=c,scale=.75]{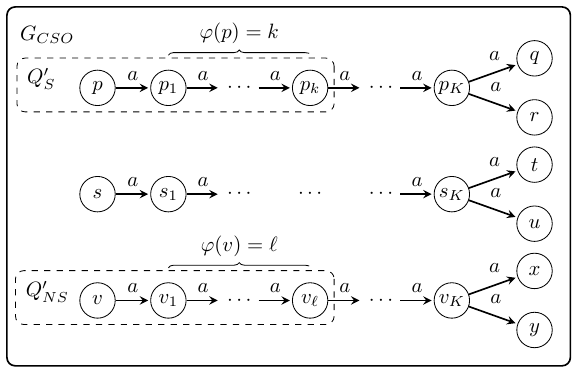}
    \end{minipage}
    \caption{Transforming K-SO to CSO for systems with a single observable event.}
    \label{fig:kso-cso-single}
  \end{figure}

  We show that $G_{\text{K-SO}}$ is K-step opaque if and only if $G_{CSO}$ is current-state opaque.

  \begin{theorem}
    The DES $G_{\text{K-SO}}$ with a single observable event is K-step opaque with respect to $Q_S$, $Q_{NS}$, and $P$ if and only if the DES $G_{CSO}$ is current-state opaque with respect to $Q'_S$, $Q'_{NS}$, and $P$.
  \end{theorem}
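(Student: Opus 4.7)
The plan is to mirror the single-observable-event proof for infinite-step opacity from Section~\ref{redINSOtoCSO}, with the ``slow-down factor'' changed from $n+1$ to $K+1$. Define the morphism $f\colon\Sigma^*\to\Sigma^*$ by $f(a)=a^{K+1}$ and $f(b)=b$ for every unobservable $b$. The chain construction encodes exactly one observable step of $G_{\text{K-SO}}$ as a traversal $p\to p_1\to\cdots\to p_K\to r$ of $K+1$ observable $a$'s in $G_{CSO}$, while unobservable transitions may occur only at original states, because each chain state has a unique outgoing $a$-transition and no others. From this rigidity I would derive the key lemma: $\delta'(I,f(s))=\delta(I,s)\subseteq Q$ for every $s\in\Sigma^*$, and every string in $L(G_{CSO})$ whose observable projection has length $n(K+1)+j$ with $0\le j\le K$ is forced to have the form $f(s')\cdot a^j$, ending in the chain state $p_j$ when $j\ge 1$ and in an original state when $j=0$.

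For the ``if'' direction I would argue the contrapositive. Assume $G_{CSO}$ is not current-state opaque, with witness $w$ and $q^*\in\delta'(I,w)\cap Q'_S$, and write $|P(w)|=n(K+1)+j$. When $j=0$, taking $s=f^{-1}(w)$ produces $\delta(I,s)\cap Q_S\neq\emptyset$; for every $s'$ with $P(s')=P(s)$, the candidate $f(s')$ is a legitimate test against the hypothesis, and since $\delta'(I,f(s'))\subseteq Q$ we conclude $\delta(I,s')\cap Q_{NS}=\emptyset$, so $(s,\varepsilon)$ witnesses a K-step violation. When $j\ge 1$, the forced decomposition of $w$ gives $q^*=p_j$ with $p\in Q_S$ and $\varphi(p)\ge j$; I set $s$ to be the $f$-preimage of the prefix ending at $p$ and pick any $t$ with $P(t)=a^j$ and $\delta(\{p\},t)\neq\emptyset$. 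For any $s',t'$ with matching projections, the existence of some $q'\in\delta(I,s')\cap Q_{NS}$ admitting $\delta(\{q'\},t')\neq\emptyset$ would imply $\varphi(q')\ge j$, hence $q'_j\in Q'_{NS}\cap\delta'(I,f(s')a^j)$, contradicting the hypothesis applied to $w'=f(s')a^j\in P^{-1}P(w)$.

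For the ``only if'' direction, assume a K-step violation $st$ with $j=|P(t)|\le K$ and set $w=f(s)\cdot a^j$. A secret state $p\in\delta(I,s)\cap Q_S$ with $\delta(\{p\},t)\neq\emptyset$ satisfies $\varphi(p)\ge j$, so $p_j\in\delta'(I,w)\cap Q'_S$ (or $p\in\delta'(I,w)\cap Q_S$ when $j=0$). For any $w'\in P^{-1}P(w)\cap L(G_{CSO})$, the forced decomposition $w'=f(s')\cdot a^j$ with $P(s')=P(s)$ yields $\delta'(I,w')\cap Q'_{NS}=\{r_j\mid r\in\delta(I,s')\cap Q_{NS},\ \varphi(r)\ge j\}$ (respectively $\delta(I,s')\cap Q_{NS}$ when $j=0$); a non-empty element here would supply some $t'$ with $P(t')=a^j$ and $\delta(\{r\},t')\neq\emptyset$, contradicting the K-step violation applied to $(s',t')$. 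Hence $w$ refutes current-state opacity of $G_{CSO}$.

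The main obstacle is the careful bookkeeping for the forced decomposition of strings in $L(G_{CSO})$. Everything hinges on the structural fact that each chain state has exactly one outgoing transition, which rigidly aligns observable prefixes of $G_{CSO}$ of length $n(K+1)+j$ with pairs $(s',a^j)$ in $G_{\text{K-SO}}$, and on the precise matching of the condition $p_j\in Q'_S$ (equivalent to $p\in Q_S$ and $j\le\varphi(p)$) with the K-step reachability condition $\varphi(p)\ge j$.
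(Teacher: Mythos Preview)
Your proposal is correct and follows essentially the same approach as the paper: both use the morphism $f(a)=a^{K+1}$, the identity $\delta'(I,f(s))=\delta(I,s)$, and the decomposition of any $w'\in L(G_{CSO})$ with $|P(w')|=n(K+1)+j$ as $f(s')\cdot a^j$ (the paper phrases this as ``take the last original state on the path''). Two cosmetic remarks: your ``if'' and ``only if'' paragraphs are swapped relative to the usual reading of the biconditional, and the claim that ``each chain state has exactly one outgoing transition'' should be that chain states have no \emph{unobservable} transitions (the state $p_K$ may have several outgoing $a$-transitions when $G_{\text{K-SO}}$ is nondeterministic), though your argument only uses the latter.
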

  \begin{proof}
    Assume that $G_{\text{K-SO}}$ is not K-step opaque, that is, there is $st\in L(G_{\text{K-SO}})$ with $|P(t)|\le K$ such that $\delta(\delta(I,s)\cap Q_S, t) \neq \emptyset$ and $\delta(\delta(I,P^{-1}P(s))\cap Q_{NS},P^{-1}P(t)) = \emptyset$. Let $f\colon \Sigma^* \to \Sigma^*$ be a morphism such that $f(a)=a^{K+1}$ and $f(b)=b$, for $a\neq b \in \Sigma$. Then, by construction, $\delta(I,s)= \delta'(I,f(s))$, and hence $\delta'(I,f(s)) \cap Q_S' \neq \emptyset$.
    If $\delta(I,P^{-1}P(s))\cap Q_{NS} = \emptyset$, then $\delta'(I,f(P^{-1}P(s)))\cap Q_{NS}' = \emptyset$ because $\delta(I,s') = \delta'(I,f(s'))$ for any $s'\in P^{-1}P(s)$, and $G_{CSO}$ is not current-state opaque.
    Otherwise, we denote by $q_s\in \delta(I,s)\cap Q_S$ and $q_{ns} \in \delta(I,P^{-1}P(s)) \cap Q_{NS}$ the states with maximal $\varphi(q_s)$ and $\varphi(q_{ns})$. Since $G_{\text{K-SO}}$ is not K-step opaque, $\varphi(q_s) > \varphi(q_{ns})$.
    Then, in $G_{CSO}$, $q_s$ has exactly one outgoing observable transition and is followed by $\varphi(q_s)=k$ secret states, while $q_{ns}$ is followed by $\varphi(q_{ns})<k$ non-secret states.
    Therefore, $\delta'(I,f(s)a^k) \cap Q_S'\neq\emptyset$ and $\delta'(I,f(s')a^k) \cap Q_{NS}'=\emptyset$ for any $s'\in P^{-1}P(s)$, and hence $G_{CSO}$ is not current-state opaque.

    On the other hand, assume that $G_{\text{K-SO}}$ is K-step opaque, and that $\delta'(I,w)\cap Q_S'\neq \emptyset$. We show that $\delta'(I,P^{-1}P(w))\cap Q_{NS} \neq \emptyset$. Consider a state $q_s \in \delta'(I,w) \cap Q_S'$ and a path $\pi$ in $G_{CSO}$ leading to $q_s$ under $w$. Denote by $p$ the last state of $\pi$ that corresponds to a state of $G_{\text{K-SO}}$; that is, $p$ is not a new state added by the construction of $G_{CSO}$. Since $q_s\in Q_S'$, we have, by construction, that $p\in Q_S$. Then the choice of $p$ partitions $w=uv$, where $u$, read along the path $\pi$, leads to state $p$, and $v=a^\ell$ is a suffix of length $\ell \le K$. Let $u'$ be a string such that $f(u')=u$. Then $p\in \delta(I,u')\cap Q_S$. Since $\varphi(p)\ge \ell$, there exists $t$ such that $P(t)=a^\ell$ and $\delta(\delta(I,u')\cap Q_S,t) \neq \emptyset$ in $G_{\text{K-SO}}$. Then K-step opacity of $G_{\text{K-SO}}$ implies that there exists $u''$ and $t'$ such that $P(u')=P(u'')$, $P(t)=P(t')$, and $\delta(\delta(I,u'')\cap Q_{NS},t') \neq \emptyset$. In particular, there is a state $q_{ns}\in \delta(I,u'')\cap Q_{NS}$ with $\varphi(q_{ns}) \ge \ell$, and $\delta'(I,f(u''))\cap Q_{NS}' \neq \emptyset$. Therefore, $\delta'(I,f(u'')a^\ell)\cap Q_{NS}' \neq \emptyset$ and $P(f(u'')a^\ell) = P(uv) = P(w)$, which completes the proof.
  \qed\end{proof}

  We now illustrate the construction.
  \begin{example}\label{ex:kso-cso-single}
    Let $G_4$ over $\Sigma=\{a,u\}$ depicted in Fig.~\ref{fig:kso-cso-ex-single} (left) be the instance of the K-SO problem from Example~\ref{ex:inso-cso-single} with $K=2$, a single observable event $\Sigma_o = \{a\}$, the set of secret states $Q_S=\{1\}$, and the set of non-secret states $Q_{NS}=\{3\}$. Then, $\varphi(1)=\varphi(3)=2$, and our transformation of K-SO to CSO results in the DES $G_4''$ depicted in Fig.~\ref{fig:kso-cso-ex-single} (right) with the set of secret states $Q_S'$ and the set of non-secret states $Q_{NS}'$.
    Analogously to Example~\ref{ex:inso-cso-single}, we consider two cases based on the presence of the unobservable transition $(1,u,3)$ in $G_4$.

    If the transition $(1,u,3)$ exists in $G_4$, then $G_4$ is 2-step opaque, since it is infinite-step opaque as shown in Example~\ref{ex:inso-cso-single}.
    The reader can see that $G_4''$ is current-state opaque, because a secret state is reachable only under a string of the form $a^k$ for $k\in\{0,1,2\}$, and for any such string there is an indistinguishable string $ua^k$ reaching a non-secret state.

    If the transition $(1,u,3)$ does not exist in $G_4$, then $G_4$ is not 2-step opaque, because it is neither current-state opaque and, obviously, neither $G_4''$ is current-state opaque.
    \begin{figure}
      \begin{minipage}{\textwidth}
        \centering
        \includegraphics[align=c,scale=0.9]{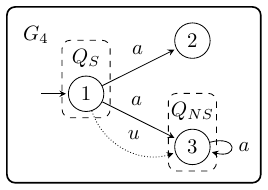}
        \quad $\Longrightarrow$ \quad
        \includegraphics[align=c,scale=0.9]{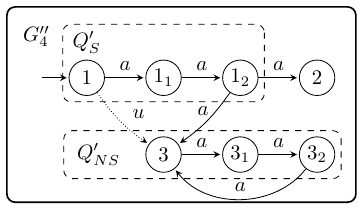}
      \end{minipage}
      \caption{An example of the transformation of the K-SO problem with a single observable event (left) to the CSO problem (right).}
      \label{fig:kso-cso-ex-single}
    \end{figure}
  \end{example}

\subsubsection{Improving the algorithmic complexity of deciding K-step opacity}\label{ksoAlgo}
  Let $G=(Q,\Sigma,\delta,I,F)$ be a DES. We design an algorithm deciding K-step opacity in time $O((K+1)2^n (n +\ell^2 m))$, where $\ell = |\Sigma_o|$ is the number of observable events, $n$ is the number of states of $G$, and $m$ is the number of transitions of $P(G)$, $m \le \ell n^2$.

  To decide whether $G$ is K-step opaque with respect to $Q_{S},Q_{NS}\subseteq Q$, and $P\colon \Sigma^* \to \Sigma_o^*$, we proceed as follows:

  \begin{enumerate}
   \item We compute the observer $\G^{obs}$ of $G$ in time $O(\ell 2^n)$;

   \item We compute the projected automaton $P(G)$ of $G$ in polynomial time $O(m+n)$;

   \item We compute a DFA $\D$ accepting the language $\Sigma_o^K$; then $\D$ has $K+1$ states and is constructed in time $O(\ell(K+1))$;

   \item We compute the product automaton $\C = P(G) \times \G^{obs}$ in time $O((m+n) \cdot \ell 2^n)$;
    \begin{itemize}
      \item states of $\C$ are of the form $Q\times 2^Q$;
    \end{itemize}

    \item For every reachable state $X$ of $\G^{obs}$, we compute $X_S = X\cap Q_S$ and $X_{NS} = X \cap Q_{NS}$;
    \begin{enumerate}
      \item If $X_S\neq\emptyset$ and $X_{NS}=\emptyset$, then $G$ is not K-step opaque;

      \item Otherwise, for every state $x\in X_S$, we add a transition from $X$ under $@$ to state $(x,X_{NS})$ of $\C$, and we add the state $(x,X_{NS})$ to set $Y$;
    \end{enumerate}

    \item We set $Y$ to be the set of initial states of $\C$, and compute $\G = \C \times \D$;

    \begin{enumerate}
      \item If $\G$ contains a reachable state of the form $(q,\emptyset,d)$, then $G$ is not K-step opaque; otherwise, $G$ is K-step opaque.
    \end{enumerate}
  \end{enumerate}

  Informally, we make use of the algorithm designed for deciding infinite-step opacity of Section~\ref{insoAlgo} with the modification that we take an intersection of $\C$ with the automaton recognizing $\Sigma_o^K$. This modification ensures that any computation of $\C$ ends after K steps, and hence we check at most K subsequent steps.

  \begin{lemma}
    The DES $G$ is K-step opaque if and only if $G$ is current-state opaque and no state of the form $(q,\emptyset,d)$ is reachable in $\G$.
  \end{lemma}
  \begin{proof}
    The algorithm works as that deciding infinite-step opacity. The only modification is that we intersect $\C$ with $\D$, recognizing $\Sigma_o^K$. This modification ensures that the algorithm checking infinite-step opacity is blocked after K subsequent steps, and hence it decides K-step opacity.
  \qed\end{proof}

  Since our algorithm constructs and searches the NFA $\G$ with $O((K+1)n2^n)$ states and $O((K+1)\ell m 2^n\ell)$ transitions, the time complexity of our algorithm is $O((K+1)2^n (n +\ell^2 m))$.

\section{Conclusions}
  We studied the transformations among the notions of language-based opacity, current-state opacity, initial-state opacity, initial-and-final-state opacity, K-step opacity, and infinite-step opacity. In particular, we provided a general transformation from language-based opacity to initial-state opacity, and constructed transformations between infinite-step opacity and current-state opacity, and between K-step opacity and current-state opacity. Together with the transformations of Wu and Lafortune~\cite{WuLafortune2013}, we have a complete list of transformations between the discussed notions of opacity. The transformations are computable in polynomial time, preserve the number of observable events, and determinism.
  We further applied the transformations to improve the algorithmic complexity of deciding language-based opacity, infinite-step opacity, and K-step opacity, and to obtain the precise computational complexity of deciding the discussed notions of opacity.

\begin{acknowledgements}
  We gratefully acknowledge suggestions and comments of the anonymous referees.
\end{acknowledgements}

\bibliographystyle{spmpsci}
\bibliography{mybib}

\end{document}